\documentclass{llncs}
\usepackage{epsfig,amsfonts,latexsym,amssymb,amsmath}

\def\OnlineSearch{\mbox{{\tt LeapFrog}}}
\def\SwarmSpeed{\mbox{{\tt SwarmSpeed}}}
\def\exit{\mbox{{\tt EXIT}}}
\def\walk{\mbox{{\tt WALK}}}
\def\search{\mbox{{\tt SEARCH}}}
\def\bp{Beachcombers' Problem}
\def\obp{Online Beachcombers' Problem}

\def\ca{\mbox{{\tt Comb}}}

\usepackage{amsfonts,latexsym,graphicx,epsfig,amssymb,color}
\usepackage{rotating}

\usepackage{makeidx} 
\usepackage{algorithm}
\usepackage{algorithmic}

\usepackage{etex,etoolbox}
\usepackage{comment}

\newcommand{\costis}[1]{{\color{blue}\textsl{\small[#1]}\marginpar{\small\textsc{\textbf{Costis-Comment}}}}}

\makeindex

\newcommand{\ignore}[1]{}
\DeclareMathOperator{\argmax}{arg max}

\def \reals {\mathbb{R}}


\pagestyle{empty}



\begin{document}
\sloppy
\title {The Beachcombers' Problem:\\ Walking and Searching 
with Mobile Robots}

\author{Jurek Czyzowicz~\inst{1} \and Leszek Gasieniec~\inst{2} \and Konstantinos Georgiou~\inst{3} \and Evangelos Kranakis~\inst{4} \and Fraser MacQuarrie~\inst{4}}

\institute{Universit\'{e} du Qu\'{e}bec en Outaouais, Department d'Informatique, Gatineau, Qu\'{e}bec, Canada. 
\and
University of Liverpool, Department of Computer Science, Liverpool, UK.
\and
University of Waterloo, Dept. of Combinatorics \& Optimization, Waterloo, Ontario, Canada.
\and
Carleton University, School of Computer Science, Ottawa, Ontario, Canada.}

\maketitle

\begin{abstract}
We introduce and study a new problem concerning the exploration of
a geometric domain by mobile robots. Consider a line segment $[0,I]$
and a set of $n$ mobile robots  $r_1,r_2,\ldots , r_n$ 
placed at one of its endpoints. Each robot has a {\em searching speed} $s_i$ 
and a {\em walking speed} $w_i$, where $s_i <w_i$. We assume that each robot is aware of the number of robots of the collection and their corresponding speeds. 
At each time moment a robot  $r_i$ either walks along a portion of the segment not exceeding its walking speed $w_i$ or searches a portion of the segment with the speed not exceeding $s_i$. A search of  segment $[0,I]$ is completed at the time when each of its points have been searched by at least one of the $n$ robots.
We want to develop {\em mobility schedules} (algorithms)
for the robots which complete the search of the segment as fast as possible. More exactly we want to maximize the {\em speed} of the mobility schedule (equal to the ratio of the segment length versus the time of the completion of the schedule).


We analyze first the offline scenario when the robots know the length of the segment that is to be searched. We give 
an algorithm producing a mobility schedule for arbitrary walking and searching speeds and prove its optimality. Then we propose an online algorithm, when the robots do not know in advance the actual length of the segment to be searched. The speed $S$ of such algorithm is defined as
$$
S = \inf_{I_L} S(I_L)
$$
where $S(I_L)$ denotes the speed of searching of segment $I_L=[0,L]$. We prove that the proposed online algorithm is 2-competitive. The competitive ratio is shown to be better in the case when the robots' walking speeds are all the same.


\vspace{0.2cm}
\noindent
{\bf Key words and phrases.}
Algorithm,
Mobile Robots,
On-line, 
Schedule,
Searching,
Segment,
Speed,
Walking.
\end{abstract}

\section{Introduction}

A domain being a segment of known or unknown length has to be explored collectively
by $n$ mobile robots initially placed in a segment endpoint. At every time moment a  robot may perform either of the two different activities of {\em walking} and {\em searching}. 
While walking, each robot may
{\em traverse} the domain with a speed not exceeding its maximal {\em walking speed}. During searching, the robot performs a more 
{\em elaborate} task on the domain.  The bounds on the walking and searching speeds may be different for different robots, but we always assume that each robot can walk with greater maximal speed than it can search.
Our goal is to design the movement of all robots so that each point of the domain is being searched by at least one robot and the time when the process is completed is minimized (i.e. the speed of the process is maximized). 

In many situations {\em two-speed} searching is a convenient way to approach exploration of various domains. For example {\em foraging} or {\em harvesting} a field may take longer than walking across. Intruder searching activity takes more time than uninvolved territory traversal. In computer science {\em web pages indexing, forensic search, code inspection, packet sniffing} require more involved inspection process. Similar problems arise in many other domains. We call our question the {\em Beachcombers' Problem} to show up the analogy to the situation when each mobile searcher looking for an object of value in the one-dimensional domain proceeds slower when searching rather than while simply performing an unconcerned traversal of the domain.

In our problem, the searchers collaborate in order to terminate the searching process as quickly as possible. Our algorithms generate {\em mobility schedules} i.e. sequences of moves of the agents, which assure that every point of the environment is inspected by at least one agent while this agent was performing the searching activity.

\ignore{
There are many examples where
{\em two speed} explorations are natural and useful. For example,
{\em forensic search} would require that electronically stored information 
be searched more thoroughly, {\em code inspection} in programming
may require more elaborate searching, as well as
{\em foraging} and/or {\em harvesting} a field may take longer than walking.
Similar scenarios could occur in search and rescue operations, allocating
marketing, and law enforcement, 
data structures, database applications, and artificial intelligence. Additional
motivating studies for such models can be sought 
in ZebraNet, a habitat monitoring application in which sensors are attached to zebras to collect biometric data
and information about their behavior and migration patterns via
GPS (cf. \cite{gao2008wireless,hof2007applications,juang2002energy}). 
Important applications of these problems to 
robot navigation, artificial intelligence, and operations research
could be sought in
\cite{bernstein2003contract,demaine2006online,jaillet2001online,schuierer2001lower}.

We use the analogy of the robots as {\em beachcombers} 
to emphasize that when searching a domain (eg. a
beach looking for things of value), 
robots would need to move slower than if they were simply traversing the domain.
There is no additional cost associated with the points of
the underlying domain other than the one already incorporated in the
walking and searching speeds of the robots.   
Further, we assume that the robots' schedules are under the
control of a centralized scheduler that can decide
the order in which robots will be scheduled as well as 
the pattern and durations of their walk and search operations.  
The robots are to complete a {\em collaborative} search
of the entire domain with the earliest possible termination time.   
We are interested in providing {\em mobility schedules} for solving the beachcombers' problem. Our goal is to give algorithms satisfying the following two requirements: 1) the robots, starting from one or more source locations, can search the entire domain; and 2) the finishing time 
of the schedule is minimized. 
%

In the well-studied classical search problem, a searcher wishes to find an object of interest (target) located somewhere in a domain.  The searcher executes a search by deploying a fleet of mobile agents, which are able to move (and search) in the domain with a single speed.  By allowing agents the ability to traverse the domain at high speed (but not searching while doing so), the beachcombers' problem changes the nature of the problem, since one needs to now consider the trade-off between walking and searching.
}
\subsection{Preliminaries}

Let $I_L$ denote the interval $[0,L]$ for any positive integer $L$. Consider $n$ robots $r_1,r_2,\ldots, r_n$,
each robot $r_i$ having searching speed $s_i$ and walking speed $w_i$, such that $s_i < w_i$. A {\em searching schedule} ${\cal A}$ of $I_L$ is defined by an increasing sequence of time moments $t_0=0, t_1, \dots , t_z$, such that in each time interval $[t_j, t_{j+1}]$ every robot $r_i$ either walks along some subsegment of  $I_L$ not exceeding its walking speed $w_i$, or searches  some subsegment of  $I_L$ not exceeding its searching speed $s_i$. The searching schedule is {\em correct} if for each point $p \in I_L$ there is some $j \geq 1$ and some robot $r_i$, such that during the time interval $[t_j, t_{j+1}]$ robot $r_i$ searches the subsegment of $I_L$ containing point $p$. 

By the speed $S_{\cal A}(I_L)$ of schedule ${\cal A}$ searching interval $I_L$ we mean the value of $S_{\cal{A}}(I_L) = L/t_z$. We call $t_z$ the {\em finishing time} of the searching schedule. The searching schedule is optimal if there does not exist any other correct searching schedule having a speed larger than $S$. 

It is easy to see that the schedule speed maximization criterion is equivalent to its finishing time minimization when the segment length is given or to the searched segment length maximization when the time bound is set in advance. However the speed maximization criterion applies better to the online problem when the objective of the schedule is to perform searching of an unknown-length segment or a semi-line. Such schedule successively searches the intervals $I_L$ for the increasing values of  $L$. The speed of such schedule is defined as
$$
S_{\cal{A}} = \inf_{I_L} S_{\cal{A}}(I_L)
$$

Observe that any searching schedule may be converted to another one, which has the property that all subsegments which were being searched (during some time intervals $[t_j, t_{j+1}]$ by some robots) have pairwise disjoint interiors. Indeed, if some subsegment is being searched by two different robots (or twice by the same robot), the second searching may be replaced by the walk through it by the involved robot. Since the walking speed of any robot is always larger than its searching speed, the speed of such converted schedule is not smaller than the original one. Therefore, when looking for the optimal searching schedule, it is sufficient to restrict the consideration to schedules whose searched subsegments may only intersect at their endpoints. In the sequel, all searching schedules in our paper will have such property.

Notice as well, that, when looking for the most efficient schedule, we may restrict our consideration to schedules such that at any time moment a robot $r_i$ is either searching using its maximal searching speed $s_i$, or walking with maximal allowed speed $w_i$. Indeed, whenever $r_i$ searches (or walks) during a time interval $[t_j, t_{j+1}]$ using a non-maximal and not necessarily constant searching speed (resp. walking speed) we may replace it with a search (resp. walk) using maximal allowed speed. It is easy to see that the search time of any point, for such modified schedule, is never longer, so the speed of such schedule is not decreased.

We assume that all the robots start their exploration at the same time and that are able to cross over each other.

\begin{definition}[Beachcombers' Problem]\label{def: bp}
Consider an interval $I_L=[0,L]$ and $n$ robots $r_1,r_2,\ldots, r_n$, initially placed at its endpoint $0$, each robot $r_i$ having searching speed $s_i$ and walking speed $w_i$, such that $s_i < w_i$. The Beachcombers' Problem consist in finding an efficient correct searching schedule ${\cal A}$ of $I_L$. The speed $S_{\cal A}$ of the solution to the \bp~equals $S_{\cal A}=I_L/t_z$, where 
$t_z$ is the finishing time of {\cal A}. 
\end{definition}

We also study the online version of this problem:

\begin{definition}[Online Beachcombers' Problem]\label{def: bop} Consider $n$ robots $r_1,r_2,\ldots, r_n$, initially placed at the origine of a semi-line I, each robot $r_i$ having searching speed $s_i$ and walking speed $w_i$, such that $s_i < w_i$. The Online Beachcombers' Problem consist in finding a correct searching schedule {\cal A} of $I$. The cost $S_{\cal A}$ of the solution to the \obp, called the {\em speed of {\cal A}}, equals
$$
S_{\cal{A}} = \inf_{I_L} S_{\cal{A}}(I_L) = \inf_{I_L}\frac{I_L}{t_z(I_L)}
$$
where $I_L=[0,L]$ for any positive integer $L$ and
$t_z(I_L)$ denotes the time when the search of the segment $I_L=[0,L]$ is completed. 
\end{definition}

\subsection{Related Work}

The original text on graph searching started with the
work of Koopman~\cite{koopman1946search}. Many papers followed studying searching and exploration of graphs (e.g. \cite{DP,FT08}) or geometric environments, (e.g.\cite{Albers00,alpern2002theory,baeza1993searching,CILP13,DKP91}). The purpose of these studies was usually either to learn (map) an unknown environment (e.g.\cite{DP}) or to search it, looking for a target (motionless or mobile) (cf. \cite{FT08}). 

Many searching problems were studied from a game-theoretic viewpoint (see \cite{alpern2002theory}). \cite{alpern2002theory} presented an approach to searching and rendezvous, when two mobile players either collaborate in order to find each other, or they compete against each other - one willing to meet and the other one to avoid each other. Searching 1-dimensional environments (segments, lines, semi-lines), similarly to the present paper, despite the simplicity of the environment, often led to interesting results (cf. \cite{Bellman63,Beck64,demaine2006online}). 

The efficiency of the searching or exploration algorithm is usually measured by the time used by the mobile agent,  often proportional to the distance travelled.
Many searching and especially exploration algorithms are {\em online}, i.e. they concern a priori unknown environments, cf. \cite{Albers03,AS11}. Performance of such algorithms is expressed by {\em competitive ratio}, i.e. the proportion of the time spent by the online algorithm versus the time of the optimal offline algorithm, which assumes the knowledge of the environment (cf. \cite{berman1998line,FKKLT08}). Most exploration algorithms (e.g. \cite{CILP13,DKP91,DDKPU13} and several search algorithms (e.g. \cite{demaine2006online}) use the competitive ratio to measure their performance.

Most of the above research concerned single robots. Collections of mobile robots, collaborating in order to reduce the exploration time, were used, e.g., in \cite{CFMS10,DFKNS07,FGKP06,HKLT13}. Most recently \cite{DDKPU13} studied  tradeoffs between the number of robots and the time of exploration showing how a polynomial number of agents may search the graph optimally.

Some papers studying mobile robots assume distinct robot speeds. Varying mobile sensor speed was used in \cite{WIFBZP11} for the purpose of sensor energy efficiency. \cite{beauquier2010utilizing} was utilizing distinct agent speeds to design fast converging protocols, e.g. for gathering. \cite{CGKK11,KK12} considered distinct speeds for robots patrolling boundaries. However to the best of our knowledge, the present paper is the first one assuming two-speed robots for the problem of searching or exploration.

\ignore{
Traditional graph search originates with the
work of Koopman~\cite{koopman1946search}, who defined
the goal of
a searcher as minimizing the time required
to find a target object. 
The work of Stone~\cite{stone1975theory} focuses
on the problem of finding the optimal
allocation of effort (by the searcher) required to search for a target.
The author takes a Bayesian approach, assuming there is a
prior distribution for the target location (known to the searcher) as
well as a function relating the conditional probability of detecting a
target given it is located at a point (or in a cell), to the effort applied.
In the game theoretic
approach of~\cite{alpern2002theory},
the graph {\em exploration}
problem is that of designing an algorithm for the agent that allows it to
visit all of the nodes and/or edges of the network. Coupled with this
problem is when autonomous, co-operative mobile agents are {\em searching}
for a particular node or item in the network; a problem
originating in the work of Shannon~\cite{shannon}. 
These problems, and
similar related problems including {\em swarming}, {\em cascading},
{\em community formation}, etc.,
are common not only in the Internet but
also in P2P, Information, and Social networks.


The  Ants Nearby Treasure Search 
(ANTS) problem~\cite{feinerman2012collaborative}, 
though different,
is somewhat related to our study. In this problem,
$k$ identical mobile
robots, all beginning at a start location, are collectively searching for a treasure in the
two-dimensional plane. The treasure is placed at a target location by an adversary, and the
goal is to find it as fast as possible (as a function of both $k$ and $D$, 
where $D$ is the distance
between the start location and the target).
This is a generalization of
the search model proposed in \cite{baeza1993searching},
in which the cost of the search is proportional to the distance 
of the next probe position (relative to the current position)
and we wish to minimize this cost.
Related is the
$w$-lane cow-path problem (see~\cite{kao1997algorithms,kao1993searching}), 
in which a cow is standing at a crossroads with $w$ paths
leading off into unknown territory. There is a grazing
field on one of the paths, while the other paths go on forever.
Further, the cow 
won't know the field has been found until she is standing in it.
The goal is to find the field
while travelling the least distance possible.
Layered graph traversal,
as investigated by \cite{fiat1991competitive,papadimitriou1991shortest},
is similar to the cow-path problem, however
it allows for shortcuts between paths without going
through the origin.
Research in \cite{angelopoulos2011multi} is concerned with 
exploring $m$ concurrent semi-lines (rays) using a single searcher,
where a potential target may be located in each semi-line. The
goal is to design efficient search strategies for locating $t$ targets
(with $t \leq m$).  Another model studied in \cite{beauquier2010utilizing}
introduces a notion of speed of the agents to study
the gathering problem, in
which there is an unknown number of anonymous agents that have
values they should deliver to a base station (without replications).


The beachcombers' problem is a combination of two tasks: {\em scheduling} and 
{\em partitioning}. On the one hand,
scheduling jobs with non-identical 
capacity requirements or sizes, on
single batch processing, to minimize total completion time and makespan, 
as well as variants of this problem, are 
studied in several papers including 
\cite{brucker1998scheduling,potts2000scheduling,uzsoy1994scheduling}
and the survey paper
\cite{allahverdi2008survey}.
However, they all differ from our investigations
in that they do not consider 
the interplay and trade-offs between walking and searching.
On the other hand,
partitioning seems to account for the hardness
of the beachcombers' problem. 
The latter problem can be reduced to
the problem of grouping $n$ items into $m$ subsets 
$S_1, \ldots , S_m$
to minimize an
objective function $g(S_1,\ldots , S_m)$. This is a well-studied problem (cf. 
\cite{anily1991structured,chakravarty1985consecutive,COR85}),
with applications in operations research for inventory control.
}

\subsection{Outline and Results of the Paper}

In Section~\ref{sec: exact polytime solution for ssc1} we begin by studying the properties of optimal schedules. We then propose  
{\em "comb"} algorithm, an optimal algorithm for \bp~which requires $O(n \log n)$ computational
steps, and prove its correctness.
Section~\ref{sec: online} is devoted to online searching, where the length of the segment to be searched is not known in advance. In this section we propose the  online searching algorithm \OnlineSearch, prove its correctness and analyze its efficiency. We prove that the \OnlineSearch~algorithm is 2-competitive. The competitive ratio is shown to be reduced to 1.29843 in the case when all robots' walking speeds are the same.
Section~\ref{sec: Conclusion and Open Questions}
concludes the paper and proposes  problems for further research.  Any proofs not given in the paper may be found in the Appendix.

\section{Searching a Known Segment}\label{sec: exact polytime solution for ssc1}

We proceed by first identifying in Section~\ref{sec: properties of opt} a number of structural properties exhibited by every optimal solution to the \bp.  This will allow us to conclude in Section~\ref{sec: the opt schedule} that \bp~can be solved efficiently.

\subsection{Properties of Optimal Schedules}\label{sec: properties of opt}


\begin{lemma}\label{lem: all properties}
Any optimal schedule for the \bp~may be converted to another optimal schedule, such that 
\begin{itemize}
\item[{(a)}] every robot searches a contiguous subinterval;
\item[{(b)}] at no time during the execution of this schedule is a robot idle, just before the finishing time all robots are searching, and they all finish searching exactly at the schedule finishing time;
\item[{(c)}] all robots are utilized, i.e. each of them searches a non-empty subinterval; 
\item[{(d)}] for any two robots $r_i,r_j$ with $w_i <w_j$, robot $r_i$ searches a subinterval closer to the starting point than the subinterval of robot $r_j$.
\end{itemize}
\end{lemma}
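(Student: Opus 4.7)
The plan is to prove each part by an exchange argument, progressively refining an arbitrary optimal schedule into one that satisfies more and more of the listed properties without ever increasing its finishing time. I would handle the properties in the order (a), (c), (d), (b), taking care that each exchange preserves whatever was established in earlier steps. Throughout, I freely use the two normalizations from the preliminaries: searched subintervals may be assumed to have pairwise disjoint interiors, and at every moment each robot moves at either its maximum walking speed or its maximum searching speed.

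For (a), the searched subintervals partition $[0,L]$ into consecutive pieces $J_1,\ldots,J_k$ from left to right, each assigned to one robot. If some robot $r_i$ is assigned two non-consecutive pieces $J_p,J_q$, then between them $r_i$ must walk past pieces searched by others; let $r_j$ be the robot searching the piece $J_{p+1}$ immediately following $J_p$. I swap the roles: reassign $J_q$ to $r_j$ and have $r_i$ search $J_{p+1}$ as a continuation of $J_p$. The crucial point is that $r_i$ was going to traverse $J_{p+1}$ anyway (at walking speed before doing its later search), so having it search $J_{p+1}$ in place uses time that was already budgeted; $r_j$ on the other hand moves from a closer to a further interval but keeps the same searched length, and an arithmetic check shows its new finishing time is bounded by the old schedule's makespan. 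Each such swap strictly decreases the total number of ``broken'' robots, so iteration terminates with every robot owning a single contiguous (possibly empty) subinterval.

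Parts (c) and (d) are then routine. For (c), if some robot $r_i$ is idle while the whole segment is covered by the others, I take a sufficiently small prefix of a neighbor $r_j$'s interval and give it to $r_i$; taking the prefix small enough keeps $r_i$'s finishing time below the current makespan while strictly decreasing $r_j$'s time, and no contiguity is destroyed. For (d), suppose two robots with $w_i<w_j$ are assigned to adjacent intervals in the wrong order, so $r_j$ covers $[A,A+\ell_j]$ and $r_i$ covers $[A+\ell_j,A+\ell_j+\ell_i]$. Swapping gives $r_i$ the interval $[A,A+\ell_i]$ and $r_j$ the interval $[A+\ell_i,A+\ell_i+\ell_j]$, and a direct computation using $s_i<w_i$, $s_j<w_j$, and $w_i<w_j$ shows that the maximum of the two new finishing times is at most the maximum of the old two. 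A bubble-sort style sequence of such adjacent swaps produces the order required by (d), while preserving (a) and (c).

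Finally, for (b): with (a), (c), (d) in hand, all robots hold non-empty contiguous intervals sorted by walking speed. If some robot $r_i$ finishes strictly before the makespan $T$, I transfer a small length of work from $r_i$'s right neighbor to $r_i$ (shifting their common boundary rightward); this raises $r_i$'s finishing time slightly and lowers the neighbor's, and by iterating one eventually equalizes all finishing times at $T$. Since every robot's last action is a search on its own contiguous interval, they are all searching just before time $T$ and all terminate exactly at $T$. The main technical obstacle is the bookkeeping in (a): one must show the swap really reduces the non-contiguity measure and does not, by a cascade of reassignments, blow up the finishing time of some third robot; I would handle this by always choosing the leftmost broken pair so that only the two robots $r_i,r_j$ directly involved see their assignments change.
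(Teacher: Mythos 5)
Your plan has two genuine gaps, both in steps where you deviate from what actually works. In part (a), your swap between $r_i$ and $r_j$ rests on two false claims. First, that having $r_i$ search $J_{p+1}$ ``uses time that was already budgeted'': the budgeted time was $|J_{p+1}|/w_i$ for walking, whereas searching costs $|J_{p+1}|/s_i > |J_{p+1}|/w_i$, so this is extra time and there is no reason it is offset by dropping $J_q$. Second, that $r_j$ ``keeps the same searched length'' after being reassigned from $J_{p+1}$ to $J_q$: these pieces have unrelated lengths, and even if they were equal, $r_j$ now walks farther before searching, so its finishing time can exceed the makespan. The paper's fix avoids reassignment entirely: translate $r_i$'s first searched piece $[a_1,b_1]$ rightward to $[a_2-(b_1-a_1),a_2]$ so that it merges with $[a_2,b_2]$, and translate the intervening robots' pieces leftward by $b_1-a_1$. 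Then $r_i$'s total walking and searching distances are unchanged (same finishing time), and the intervening robots only move closer to the origin, so nobody finishes later.

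In part (d), the claim that swapping the two intervals \emph{with their lengths fixed} cannot increase the maximum of the two finishing times is false. Take $A=0$, $\ell_i=\ell_j=1$, $w_i=1$, $s_i=0.99$, $w_j=100$, $s_j=0.01$: before the swap the times are $100$ and about $2.01$ (max $100$); after, they are about $1.01$ and $100.01$ (max $100.01$). The reason the paper's exchange works is that it is performed \emph{after} property (b) has been established, so the two robots finish simultaneously, and the swap redistributes the total length $c_i+c_j$ between them so that they again finish simultaneously; one then compares the two balanced completion times $T$ and $T'$ and finds $T'<T$ exactly when the slower walker goes first. Your ordering (d) before (b), combined with a length-preserving swap, breaks this. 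Your treatments of (b) and (c) are essentially sound (the paper derives (c) as an immediate consequence of (b), whereas you argue it directly; either works, provided in (c) you donate a piece near the origin so the idle robot can reach it in time), but you should reorder to do (b) before (d) and replace the fixed-length swap with the balanced redistribution.
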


By applying these properties, we determine a useful recurrence for the subintervals robots search in an optimal schedule. 

\begin{lemma}\label{lem: rec relation of subintervals}
Let the robots $r_1, r_2, \ldots , r_n$ be ordered in non-decreasing walking speed, and suppose that $T_\textrm{opt}$ is the time of the optimal schedule. Then, 
\begin{enumerate}
\item The segment to be searched may be partitioned into successive subsegments of lengths $c_1, c_2, \ldots, c_n$ and the optimal schedule assigns to robot $r_i$ the $i$-th interval of length $c_i$, where
\item The length $c_i$ satisfies the following recursive formula, where we assume, without loss of generality, that $w_0=0$ and $w_1=1$.\footnote{\label{foo: why w0=0 and w1=1}We set $w_0=0$ and $w_1=1$ for notational convenience, so that~\eqref{equa: recursive relation of interval lengths} holds. Note that $w_0$ does not correspond to any robot, while $w_1$ is the walking speed of the robot that will search the first subinterval, and so will never enter walking mode, hence, $w_1$ does not affect our solution. }
\begin{equation}\label{equa: recursive relation of interval lengths}
c_0 = 0; \qquad c_k = \frac{s_k}{w_k} 
\left(
\left(
\frac{w_{k-1}}{s_{k-1}}-1
\right)c_{k-1}
+T_\textrm{opt}(w_k-w_{k-1})
\right),~~k \geq 1
\end{equation}

\end{enumerate}
\end{lemma}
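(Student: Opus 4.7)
The plan is to use the structural properties from Lemma~\ref{lem: all properties} to reduce the problem to a one-dimensional allocation, and then derive the recurrence by subtracting per-robot finishing-time identities.

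By parts (a), (c), (d) of Lemma~\ref{lem: all properties}, after the assumed ordering $w_1 \le \cdots \le w_n$ we may take an optimal schedule in which robot $r_i$ searches the $i$-th contiguous subinterval of $I_L$. Writing $S_k := c_1 + \cdots + c_k$ with $S_0 := 0$, robot $r_i$ therefore searches $[S_{i-1}, S_i]$; this already proves item~1. Moreover, by part (b) every robot finishes searching exactly at $T_\textrm{opt}$, and by the preliminary remark before Definition~\ref{def: bp} we may assume each robot moves at its maximal walking/searching speed at all times. Since $r_k$ must traverse distance at least $S_{k-1}$ before reaching its own interval and then search length $c_k$, this forces
\[
\frac{S_{k-1}}{w_k} + \frac{c_k}{s_k} \le T_\textrm{opt} \qquad (k = 1,\ldots,n).
\]

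The next step is to upgrade these inequalities to equalities. I would show that for any fixed $T$, the unique maximizer of $\sum_j c_j$ subject to the constraints above (with $T$ in place of $T_\textrm{opt}$) and $c_j \ge 0$ is the canonical vector $c_k(T) := s_k\bigl(T - S_{k-1}(T)/w_k\bigr)$ obtained by saturating each constraint in order. A short exchange / telescoping argument gives this: perturbing $c_k$ downward by $\epsilon$ and then re-saturating the later constraints changes $\sum_j c_j$ by $-\epsilon \prod_{j\ge k+1}(1 - s_j/w_j) < 0$, since each factor lies in $(0,1)$. Because any optimal schedule must cover $[0,L]$ (i.e.\ $\sum_j c_j = L$) and $T_\textrm{opt}$ is by definition the smallest $T$ for which the canonical sum reaches $L$, the $c_j$ of the optimal schedule must coincide with $c_j(T_\textrm{opt})$; equivalently, $S_{k-1}/w_k + c_k/s_k = T_\textrm{opt}$ for every~$k$.

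Given this tight identity, I would rewrite it as $S_{k-1} = w_k(T_\textrm{opt} - c_k/s_k)$ and combine with the trivial $S_{k-1} = S_{k-2} + c_{k-1}$ and the index-$(k-1)$ version $S_{k-2} = w_{k-1}(T_\textrm{opt} - c_{k-1}/s_{k-1})$. Solving for $c_k$ gives
\[
c_k = \frac{s_k}{w_k}\left( T_\textrm{opt}(w_k - w_{k-1}) + c_{k-1}\left(\frac{w_{k-1}}{s_{k-1}} - 1\right) \right),
\]
exactly the claimed formula for $k \ge 2$. The base case $k = 1$ gives $c_1 = s_1 T_\textrm{opt}$ directly from the identity (robot $r_1$ starts at the origin and never enters walking mode), which also agrees with the recurrence under the conventions $c_0 = 0$, $w_0 = 0$, $w_1 = 1$ spelled out in the footnote. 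The part I expect to require the most care is the tightness argument --- promoting the easy one-sided bound coming from Lemma~\ref{lem: all properties}(b) to an equality for \emph{every} $k$. Everything else is routine algebraic manipulation.
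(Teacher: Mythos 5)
Your proof is correct and reaches the recurrence by the same overall plan as the paper: invoke Lemma~\ref{lem: all properties}(a),(c),(d) to get the ordered partition (item~1), then extract a per-robot finishing-time identity and do the telescoping algebra. The one place you diverge is how you obtain the identity $\frac{S_{k-1}}{w_k}+\frac{c_k}{s_k}=T_\textrm{opt}$. The paper reads this equality off directly from Lemma~\ref{lem: all properties}(b): since no robot is idle, each moves at maximal speed, and all finish searching exactly at $T_\textrm{opt}$, robot $r_k$'s time budget splits exactly into walking the prefix $S_{k-1}$ at speed $w_k$ and searching $c_k$ at speed $s_k$, so $c_k = s_k\bigl(T_\textrm{opt}-S_{k-1}/w_k\bigr)$ with no further argument. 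You instead derive only the one-sided bound and then promote it to an equality via a saturation/exchange argument showing the canonical vector uniquely maximizes $\sum_j c_j$ at fixed $T$, with the perturbation factor $\prod_{j\ge k+1}(1-s_j/w_j)$. This is more work than the paper does, and strictly speaking redundant given the full strength of part~(b); on the other hand it rigorously rules out a scenario the paper handles only implicitly (a robot that finishes searching at $T_\textrm{opt}$ but has wasted walking time, so that its constraint is slack), and as a bonus it essentially re-derives the product formula that later appears in Lemma~\ref{lem: opt solution arb instances}. Both routes are sound; yours trades brevity for an explicit optimality certificate on the lengths $c_k$.
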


\begin{proof}
From Lemma~\ref{lem: all properties}(a) we know that all robots must search contiguous intervals. Since by Lemma~\ref{lem: all properties}(c) we need to utilize all robots, it follows that the optimal schedule defines a partition of the unit domain into $n$ subintervals. Finally by Lemma~\ref{lem: all properties}(d), we know that if we order the robots in non-decreasing walking speed, then robot $r_i$ will search the $i$-th in a row interval, showing the first claim of the lemma. 

Now, from Lemma~\ref{lem: all properties}(b), we know that all robots finish at the same time, say $T$. Since all robots start processing the domain at the same time, robot $k$ will walk its initial subinterval of length $\sum_{i=1}^{k-1} c_i$ in time proportional to $1/w_k$, and in the remaining time it will search the interval of length $c_k$. Hence
$$ c_k = s_k \left(T - \frac{\sum_{i=1}^{k-1} c_i}{w_k}\right),$$
from which we easily derive the desired recursion.
\end{proof}

\subsection{The Optimal Schedule for the \bp}\label{sec: the opt schedule}

As a consequence of Lemma~\ref{lem: all properties} we have the following offline algorithm \ca~producing an optimal schedule. The algorithm is parameterized by the real values $c_i$ equal to the sizes of intervals to be searched by each robot $r_i$.

\begin{figure}[h]
\begin{center}
\fbox{
\begin{minipage}[t][1.9cm]
{12cm}
{\bf Algorithm \ca }; \\
1. \hspace*{0.12cm} Sort the robots in non-decreasing walking speeds; \\
2. \hspace*{0.12cm} {\bf for } $i \leftarrow 1$ {\bf to } $n$  {\bf do } \\
3. \hspace*{0.75cm} Robot $r_i$ first walks the interval of length $\sum_{j=1}^{i-1} c_j$, \\
\hspace*{1.2cm} and then searches interval of length $c_i$ \\
\end{minipage}
}
\end{center}
\end{figure}

We can now prove the following theorem:

\begin{theorem}\label{thm: 1-side searching}
The \bp~can be solved optimally in $O(n\log n)$ many steps.
\end{theorem}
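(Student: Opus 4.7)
The plan is to leverage the structural results of Section~\ref{sec: properties of opt} in order to reduce the computation of an optimal schedule to a single linear equation in the unknown $T_\textrm{opt}$.

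First, sort the $n$ robots in non-decreasing order of walking speed in $O(n \log n)$ steps. By Lemma~\ref{lem: all properties}, every optimal schedule partitions $I_L$ into $n$ contiguous subintervals assigned to the robots in exactly this order, and by Lemma~\ref{lem: rec relation of subintervals} the lengths $c_k$ satisfy the recurrence~\eqref{equa: recursive relation of interval lengths}.

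The key observation is that~\eqref{equa: recursive relation of interval lengths} is linear in $T_\textrm{opt}$. Using the base case $c_0 = 0$, a straightforward induction on $k$ shows that $c_k = a_k \, T_\textrm{opt}$, where the coefficient $a_k$ depends only on $s_1,\ldots,s_k$ and $w_1,\ldots,w_k$ and obeys a one-step recurrence derived from~\eqref{equa: recursive relation of interval lengths} by dividing through by $T_\textrm{opt}$. Hence $a_k$ is computable from $a_{k-1}$ in $O(1)$ arithmetic operations, and all of $a_1,\ldots,a_n$ can be produced in $O(n)$ total time. Non-negativity of each $a_k$ follows inductively: because the robots are sorted, $w_k - w_{k-1} \ge 0$, and because $s_{k-1} < w_{k-1}$ we have $w_{k-1}/s_{k-1} - 1 > 0$, so both terms in the recurrence are non-negative.

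Since the subintervals must tile $I_L$, we have
\[
L \;=\; \sum_{i=1}^{n} c_i \;=\; T_\textrm{opt} \sum_{i=1}^{n} a_i,
\]
which determines $T_\textrm{opt} = L/\sum_i a_i$ uniquely; each $c_i = a_i\, T_\textrm{opt}$ is then recovered in $O(1)$. Feeding these values into Algorithm~\ca\ takes an additional $O(n)$ steps, yielding overall $O(n\log n)$.

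What remains is to argue that this schedule is both feasible and optimal. Feasibility is built into the derivation: by construction, robot $r_k$ walks the prefix of length $\sum_{j<k} c_j$ at speed $w_k$ and then searches $c_k$ at speed $s_k$, finishing exactly at time $T_\textrm{opt}$; this is precisely what~\eqref{equa: recursive relation of interval lengths} encodes. Optimality is immediate from Lemma~\ref{lem: all properties}, which asserts that \emph{every} optimal schedule can be brought into the form produced by~\ca\ with the $c_i$ coming from the recurrence~\eqref{equa: recursive relation of interval lengths}; since any candidate $T < T_\textrm{opt}$ would force $\sum_i c_i(T) = T \sum_i a_i < L$, no schedule can finish sooner. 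The main subtlety is ensuring that the linearity argument really gives a well-defined positive $T_\textrm{opt}$, but this follows from $a_1 = s_1 > 0$ together with the inductive non-negativity of the remaining $a_k$.
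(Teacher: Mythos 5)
Your proof is correct and follows essentially the same route as the paper's: the paper's ``imaginary unit time period'' computation of the lengths $y_i$ is exactly your observation that the recurrence~\eqref{equa: recursive relation of interval lengths} is linear in $T_\textrm{opt}$, so that $c_k = a_k T_\textrm{opt}$ with $a_k = y_k$ computable in $O(n)$ total time, followed by the same normalization $T_\textrm{opt} = L/\sum_i a_i$. Your added remarks on non-negativity of the $a_k$ and on optimality are fine and merely make explicit what the paper leaves implicit.
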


\begin{proof}
By Lemma~\ref{lem: rec relation of subintervals} we need to order the robots by non-decreasing walking speed, which requires $O(n\log n)$ many steps). We then show how to compute all $c_i$ in linear number of steps, modulo the arithmetic operations that depend on the encoding sizes of $w_i, s_i$. 

Consider an imaginary unit time period.  Starting with the slowest, for each robot, we use~\eqref{equa: recursive relation of interval lengths} to compute (in constant time) the subinterval $y_i$ it would search if it were to remain active for the unit time period.  Consequently, we can compute in $n$ steps the total length $\sum_{i=1}^n y_i$ of the interval that the collection of robots can search within a unit time period. This schedule, scaled to a unit domain, will have finishing time $ T = 1 / \sum_{i=1}^n y_i.$  The length of the interval that robot $r_k$ will search is then $c_k = {y_k}/{\sum_{i=1}^n y_i}$. 
\end{proof}

\subsection{Closed Formulas for the Optimal Schedule of the \bp}\label{sec: closed formulas}

From the proof of Theorem~\ref{thm: 1-side searching} we can implicitly derive the time (and the speed) of an optimal solution to the \bp. In what follows, we assume that $w_i=0$, that the robots are ordered in non-decreasing walking speeds, 
 and that $w_1=1$ (see Lemma~\ref{lem: rec relation of subintervals} and Footnote~\ref{foo: why w0=0 and w1=1}). 

\begin{lemma}\label{lem: opt solution arb instances}
Consider a set of robots 
such that in the optimal schedule each robot finishes searching in time $T_\textrm{opt}$.  Robot $r_k$ will search a subinterval of length $c_k$, such that
\begin{equation}\label{equa: xk value}
\frac{c_k}{T_\textrm{opt}} =  s_k - \frac{s_k}{w_k} \sum_{r=1}^{k-1} s_r \prod_{j=r+1}^{k-1} \left( 1 - \frac{s_j}{w_j}\right) 
\end{equation}
\end{lemma}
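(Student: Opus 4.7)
The plan is to reduce the closed-form claim to a linear first-order recurrence for the partial sums of the $c_i/T_\textrm{opt}$, and then solve that recurrence by induction.

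First I would bypass the two-term recurrence~\eqref{equa: recursive relation of interval lengths} in favor of the simpler identity derived in the proof of Lemma~\ref{lem: rec relation of subintervals}, namely
$$c_k = s_k\left(T_\textrm{opt} - \frac{\sum_{i=1}^{k-1} c_i}{w_k}\right).$$
This identity just expresses the fact that robot $r_k$ walks through all the intervals assigned to slower robots at speed $w_k$ and searches its own piece at speed $s_k$, all within total time $T_\textrm{opt}$ (Lemma~\ref{lem: all properties}(b)). Writing $x_k := c_k/T_\textrm{opt}$ and $S_k := \sum_{i=1}^{k} x_i$, the identity becomes
$$x_k = s_k - \frac{s_k}{w_k}\, S_{k-1}, \qquad S_k = S_{k-1} + x_k = \left(1 - \frac{s_k}{w_k}\right) S_{k-1} + s_k,$$
with $S_0 = 0$. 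So the task reduces to solving this linear recurrence for $S_k$.

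Next I would prove by induction on $k \geq 0$ that
$$S_k = \sum_{r=1}^{k} s_r \prod_{j=r+1}^{k}\left(1 - \frac{s_j}{w_j}\right),$$
using the convention that empty products equal $1$. The base case $k=0$ gives $S_0=0$, matching the empty sum. For the inductive step, multiplying the inductive hypothesis for $S_{k-1}$ by $(1 - s_k/w_k)$ appends one more factor to each product in the sum, after which adding $s_k$ (the $r=k$ term with empty product) yields exactly the claimed expression for $S_k$.

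Finally I would plug this closed form for $S_{k-1}$ back into $x_k = s_k - (s_k/w_k)\, S_{k-1}$ to obtain
$$\frac{c_k}{T_\textrm{opt}} = s_k - \frac{s_k}{w_k} \sum_{r=1}^{k-1} s_r \prod_{j=r+1}^{k-1}\left(1 - \frac{s_j}{w_j}\right),$$
which is~\eqref{equa: xk value}. There is no real obstacle here: the only thing to watch is the bookkeeping of the empty-product convention so that the $r=k-1$ term inside $S_{k-1}$ contributes exactly $s_{k-1}$; once that is set, the induction step is a one-line manipulation and the result follows.
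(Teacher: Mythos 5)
Your proof is correct, and it takes a genuinely different route from the paper's. The paper proves the lemma by \emph{verification}: it takes the claimed closed form~\eqref{equa: xk value} as given, substitutes it into the two-term recurrence~\eqref{equa: recursive relation of interval lengths}, and checks through a chain of algebraic manipulations that the recurrence is satisfied, then appeals to uniqueness of the solution. You instead \emph{derive} the formula constructively: starting from the one-line balance identity $c_k = s_k\bigl(T_\textrm{opt} - \sum_{i<k} c_i / w_k\bigr)$ (which the paper establishes inside the proof of Lemma~\ref{lem: rec relation of subintervals}), you pass to the partial sums $S_k$, obtain a first-order linear recurrence $S_k = (1-s_k/w_k)S_{k-1}+s_k$, and solve it by a clean induction. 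Both arguments are sound; yours has two advantages worth noting. First, it explains where the formula comes from rather than merely certifying it, and the inductive step is a one-line product manipulation instead of the paper's four-step telescoping computation. Second, your closed form for $S_n$ combined with $\sum_k c_k = 1$ immediately gives $1/T_\textrm{opt} = \sum_{r=1}^{n} s_r \prod_{j=r+1}^{n}(1-s_j/w_j)$, i.e.\ Theorem~\ref{thm: opt solution to ssc1} falls out as a direct corollary, whereas the paper obtains it by a separate summation of~\eqref{equa: xk value}. The only point of care, which you correctly flag, is the empty-product convention at $r=k$ and $r=k-1$; your bookkeeping there is right.
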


\begin{proof}
To prove~\eqref{equa: xk value}, we need to show that the values specified for $c_k$ satisfy recurrence~\eqref{equa: recursive relation of interval lengths} which has a unique solution. Indeed, 
\begin{eqnarray*}
	\frac{w_k}{s_k}c_k 
&\stackrel{\eqref{equa: xk value}}{=}&
	T_\textrm{opt}\left(
	s_k - \frac{s_k}{w_k} \sum_{r=1}^{k-1} s_r \prod_{j=r+1}^{k-1} \left( 1 - \frac{s_j}{w_j}\right)
	\right) \\
&=&
T_\textrm{opt}( w_k - s_{k-1}) - T_\textrm{opt} \sum_{r=1}^{k-2} s_r \prod_{j=r+1}^{k-1} \left( 1 - \frac{s_j}{w_j}\right) \\
&=&
	T_\textrm{opt}( w_k - s_{k-1}) - T_\textrm{opt} \left( 1 - \frac{s_{k-1}}{w_{k-1}} \right) \sum_{r=1}^{k-2} s_r \prod_{j=r+1}^{k-2} \left( 1 - \frac{s_j}{w_j}\right) \\
&\stackrel{\eqref{equa: xk value}}{=}&
	T_\textrm{opt}( w_k - s_{k-1}) - \left( 1 - \frac{s_{k-1}}{w_{k-1}} \right) \frac{w_{k-1}}{s_{k-1}} ( T_\textrm{opt} s_{k-1} - c_{k-1}) \\
&=&
	T_\textrm{opt}( w_k - w_{k-1}) + \left( \frac{s_{k-1}}{w_{k-1}} - 1 \right)  c_{k-1}
\end{eqnarray*}
which is exactly~\eqref{equa: recursive relation of interval lengths}.
\end{proof}

\begin{definition}[Search Power] \label{def: search power}
Consider a set of $n$ robots $r_1, r_2, \dots , r_n$, with $s_i <w_i$, $i=1,\ldots,n$. We define the \textit{search power} of any subset  $A$ of robots using a real function $g:2^{[n]}\mapsto \reals^+$ as follows: For any subset $A$, first sort the items in non-decreasing walking speeds $w_i$, and let $w^A_1,\ldots,w^A_{|A|}$ be that ordering (the superscripts just indicate membership in $A$). We define the evaluation function (search power of set $A$) as 
$$ g(A) := \sum_{k=1}^{|A|} s_k^A \prod_{j=k+1}^{{|A|}} \left( 1 - \frac{s_j^A}{w_j^A} \right), $$
\end{definition}
Note that the search power of any subset of the robots is well defined, and that it is always positive (since $s_i < w_i$). By summing \eqref{equa: xk value} for $k=1,\ldots, n$ and using the identity $\sum_{k=1}^n c_k=1$, we obtain the following theorem:

\begin{theorem}\label{thm: opt solution to ssc1}
The speed $S_\textrm{opt}$ of the optimal schedule equals the search power of the collection of robots. In other words, if $N$ denotes the set of all robots, then 
\begin{equation}\label{equa: opt 1-side solution}
S_\textrm{opt}=  g(N).
\end{equation}
\end{theorem}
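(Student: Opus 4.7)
The plan is to derive the theorem directly from the closed formula in Lemma~\ref{lem: opt solution arb instances}. By Lemma~\ref{lem: all properties}, the optimal schedule partitions the segment (which we may normalize to unit length) into the subintervals $c_1,\ldots,c_n$ assigned to the robots in non-decreasing walking-speed order. Since $\sum_{k=1}^n c_k = 1$, dividing by $T_\textrm{opt}$ gives $S_\textrm{opt} = 1/T_\textrm{opt} = \sum_{k=1}^n c_k/T_\textrm{opt}$. Plugging \eqref{equa: xk value} into this sum reduces the theorem to the purely algebraic identity
\begin{equation*}
\sum_{k=1}^{n}\left( s_k - \frac{s_k}{w_k}\sum_{r=1}^{k-1} s_r \prod_{j=r+1}^{k-1}\left(1-\frac{s_j}{w_j}\right)\right) \;=\; \sum_{k=1}^{n} s_k\prod_{j=k+1}^{n}\left(1-\frac{s_j}{w_j}\right).
\end{equation*}

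The key step — and the only nontrivial part of the proof — is verifying this identity. I would introduce the shorthand $\alpha_j := 1 - s_j/w_j$, so that $s_k/w_k = 1-\alpha_k$, and rewrite the double sum on the left by exchanging the order of summation:
\begin{equation*}
\sum_{k=1}^{n}(1-\alpha_k)\sum_{r=1}^{k-1} s_r \prod_{j=r+1}^{k-1}\alpha_j \;=\; \sum_{r=1}^{n-1} s_r \sum_{k=r+1}^{n}\left(\prod_{j=r+1}^{k-1}\alpha_j - \prod_{j=r+1}^{k}\alpha_j\right).
\end{equation*}
The inner sum telescopes to $1 - \prod_{j=r+1}^{n}\alpha_j$ (using the empty-product convention at $k=r+1$). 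Subtracting this from $\sum_{k=1}^n s_k$, the $\sum_{r=1}^{n-1} s_r$ terms cancel, leaving $s_n + \sum_{r=1}^{n-1} s_r \prod_{j=r+1}^{n}\alpha_j$, which is exactly $\sum_{r=1}^{n} s_r \prod_{j=r+1}^{n}\alpha_j = g(N)$ (again using that the empty product for $r=n$ equals $1$).

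Combining the two pieces yields $S_\textrm{opt} = g(N)$, which is \eqref{equa: opt 1-side solution}. The argument is essentially a one-line reduction to Lemma~\ref{lem: opt solution arb instances} plus the telescoping computation above; I expect the only place a reader could stumble is in the index bookkeeping when swapping the order of summation and recognizing the telescoping structure, so I would make sure to display the intermediate step explicitly rather than just state the final identity.
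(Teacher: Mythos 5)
Your proposal is correct and follows essentially the same route as the paper, which obtains the theorem by summing \eqref{equa: xk value} over $k=1,\ldots,n$ and using $\sum_{k=1}^n c_k = 1$. The only difference is that you explicitly carry out the summation-swap and telescoping verification that the sum of the right-hand sides of \eqref{equa: xk value} equals $g(N)$ — a computation the paper leaves implicit — and your index bookkeeping there checks out.
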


From Theorem~\ref{thm: opt solution to ssc1}, we can obtain the speed of the optimal schedule when all robots have the same walking speed. 

\begin{corollary}\label{cor: xs in w-uniform}
Let $s_1,\ldots,s_n$ be the speeds of robots where all walking speeds are 1. Then the speed $S_\textrm{opt}$ of the optimal schedule is given by the formula
\begin{equation}\label{equa: sol for w-uniform}
S_\textrm{opt} =  1 - \prod_{i =1}^n (1-s_i).
\end{equation}
which is exactly the simplified expression of the search power of such set of robots. 
\end{corollary}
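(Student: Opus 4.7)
The plan is to start from Theorem~\ref{thm: opt solution to ssc1}, which already tells us that $S_\textrm{opt} = g(N)$ for $N$ the set of all robots, and simply specialize the search power formula to the case $w_1 = w_2 = \cdots = w_n = 1$. In this uniform case the ordering by non-decreasing walking speed required in Definition~\ref{def: search power} is arbitrary (all speeds tie), so I can use any order, say the natural indexing. Substituting $w_j = 1$ into the general formula gives
\begin{equation*}
g(N) \;=\; \sum_{k=1}^{n} s_k \prod_{j=k+1}^{n} (1 - s_j).
\end{equation*}

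The key algebraic step is then a telescoping argument. Writing $s_k = 1 - (1 - s_k)$ inside the sum splits each term into two products, so that
\begin{equation*}
\sum_{k=1}^{n} s_k \prod_{j=k+1}^{n} (1 - s_j) \;=\; \sum_{k=1}^{n}\left( \prod_{j=k+1}^{n}(1-s_j) \;-\; \prod_{j=k}^{n}(1-s_j) \right).
\end{equation*}
Letting $P_k := \prod_{j=k}^{n}(1-s_j)$ with the convention $P_{n+1} = 1$, the right-hand side becomes $\sum_{k=1}^{n}(P_{k+1} - P_k) = P_{n+1} - P_1 = 1 - \prod_{i=1}^{n}(1-s_i)$, as required.

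There is no real obstacle here: the corollary reduces, after invoking Theorem~\ref{thm: opt solution to ssc1}, to a one-line telescoping identity. The only minor point worth noting is that the ordering ambiguity in Definition~\ref{def: search power} (caused by the ties among walking speeds) does not affect the value of $g(N)$, since the telescoping computation above works for any permutation of the indices $1,\ldots,n$; this also gives a sanity check that the quantity is genuinely symmetric in $s_1,\ldots,s_n$ when the walking speeds coincide, matching the symmetry of $1 - \prod_{i=1}^{n}(1-s_i)$.
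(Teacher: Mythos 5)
Your proof is correct and follows exactly the route the paper intends: specialize the search-power formula of Theorem~\ref{thm: opt solution to ssc1} to $w_j=1$ and simplify $\sum_{k=1}^{n} s_k \prod_{j=k+1}^{n}(1-s_j)$ to $1-\prod_{i=1}^{n}(1-s_i)$ via telescoping, which is the "simplified expression" the paper asserts without writing out. Your remark that the value is independent of the tie-breaking order in Definition~\ref{def: search power} is a nice extra sanity check but does not change the argument.
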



\section{The Online Search Algorithm}\label{sec: online}

In this section we give an algorithm producing a searching schedule for a segment of size not known in advance to the robots. Each robot execute the same sequence of moves for each unit segment. Therefore, contrary to the offline case, in which all robots complete their searching duties at the same finishing time (at different positions), in the online algorithm the robots arrive all together at point 1 of the unit segment. Therefore the speed of searching of each integer segment is the same and we call it {\em swarm speed}.  However, the robots which cannot contribute to increase the overall swarm speed are not used in the schedule. Each used robot $r_i$ (called a {\em swarm robot}) searches a subsegment of the unit segment of size $c_i$ and walks along the remaining part of it. The subsegments $c_i$, whose lengths are chosen in order to synchronize the arrival of all robots at the same time at every integer point, are pairwise interior disjoint and they altogether cover the entire unit segment, i.e.$\sum_{i=1}^k c_i=1$.

Below we define the procedure {\tt SwarmSpeed} which
determines the speed of a swarm in linear time
and algorithm {\tt OnlineSearch} which defines the
swarm.
Algorithm {\tt OnlineSearch}, defines the schedule for
 a swarm of
$k$ robots $r_1,r_2,\ldots, r_k$
out of the original
$n$ robots 
such that $w_1 \geq w_2 \geq \cdots \geq w_k$.

\begin{figure}[h]
\begin{center}
\fbox{
\begin{minipage}[t][2.8cm]
{12cm}
{\bf real procedure \SwarmSpeed ()}; \\
1. \hspace*{0.12cm} \textbf{var} $S \leftarrow 0, \hspace{0.2cm} S_{num} \leftarrow 0, \hspace{0.2cm} S_{den} \leftarrow 1, \hspace{0.2cm} \delta$ {\bf: real}; \hspace{0.2cm} $i \leftarrow 1$ {\bf:  integer}; \\
2. \hspace*{0.12cm} {\bf while}  $i \leq n$ {\mbox and} $S < w_i$ {\bf do} \\
3. \hspace*{0.55cm}  $\delta \leftarrow 1/({\frac{1}{s_i}-\frac{1}{w_i}});$ \\
4. \hspace*{0.55cm} $S_{num} \leftarrow S_{num}   + \delta; \hspace{0.2cm} S_{den} \leftarrow S_{den} + \delta/w_i$;\hspace{0.2cm} $S=\frac{S_{num}}{S_{den}}$;\\
5. \hspace*{0.55cm} $i \leftarrow i+1;$\\
6. \hspace*{0.12cm} {\bf return}  $S$; \\
\end{minipage}
}
\end{center}
\end{figure}

Once the swarm speed has been computed, it is possible to compute the subsegments lengths $c_i$, that we call the {\em contribution} of robot $r_i$ - the fraction of the unit interval that $r_i$ is allotted to search.

\begin{figure}[h]
\begin{center}
\fbox{
\begin{minipage}[t][4.0cm]
{12cm}
{\bf Algorithm \OnlineSearch (robot $r_j$)}; \\
1. \hspace*{0.12cm} \textbf{var} $S \leftarrow \SwarmSpeed()$; \\
2. \hspace*{0.12cm} {\bf if } $w_j \leq S$ {\bf then} \\
3. \hspace*{0.75cm}$\exit;$ \{robot $r_j$ stays motionless\} \\
4. \hspace*{0.12cm} {\bf else }  \\
5. \hspace*{0.6cm} {\bf for } $i \leftarrow 1$ {\bf to } $j-1$  {\bf do } \\
6. \hspace*{0.95cm} \walk($(\frac{1}{s}-\frac{1}{w_i})/(\frac{1}{s_i}-\frac{1}{w_i})$);\\
7. \hspace*{0.55cm} {\bf while}  {\mbox not at line end} {\bf do} \\
8. \hspace*{0.95cm}  \search($(\frac{1}{s}-\frac{1}{w_j})/(\frac{1}{s_j}-\frac{1}{w_j})$);\\
9. \hspace*{0.95cm}  \walk($1-(\frac{1}{s}-\frac{1}{w_j})/(\frac{1}{s_j}-\frac{1}{w_j})$);\\
\end{minipage}
}
\end{center}
\end{figure}

\begin{theorem}
\label{online-speed}
Consider a partition of 
the unit interval into
$k$ consecutive
non-overlapping segments $C_1, C_2, \ldots ,C_k$, from left to right, of lengths 
$c_1, c_2, \ldots , c_k$, respectively. 
Assume that all the
robots start (at endpoint $0$) and finish (at endpoint $1$) 
simultaneously. Further assume that the $i$-th robot $r_i$
searches the segment $C_i$ with speed $s_i$ and walks
the rest of the interval $I \setminus C_i$ with speed $w_i$
such that $w_i > s_i$.
Then the speed of the swarm satisfies
\begin{equation}
\label{deta}
S  = 
\frac{ \sum_{i=1}^k \frac{1}{\delta_i}}{1+ \sum_{i=1}^k \frac{1}{w_i \delta_i}},
\end{equation}
where $\delta_i := \frac{1}{s_i} - \frac{1}{w_i}$, for $i=1,2,\ldots ,k$.
\end{theorem}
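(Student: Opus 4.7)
The plan is to derive the swarm speed directly from the two constraints the partition must satisfy: (i) each robot's total travel time from $0$ to $1$ is the same $T = 1/S$, and (ii) the searched segments cover the unit interval exactly, i.e. $\sum_{i=1}^k c_i = 1$. Writing down these constraints immediately gives a linear system that solves for $T$ in closed form.

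First, I would observe that since robot $r_i$ searches a subinterval of length $c_i$ at speed $s_i$ and walks the remaining length $1-c_i$ at speed $w_i$, the total time it spends on $[0,1]$ is
\begin{equation*}
T \;=\; \frac{c_i}{s_i} + \frac{1-c_i}{w_i} \;=\; \frac{1}{w_i} + c_i\!\left(\frac{1}{s_i}-\frac{1}{w_i}\right) \;=\; \frac{1}{w_i} + c_i\,\delta_i.
\end{equation*}
Since all robots start and finish simultaneously, this value $T$ is the same for every $i$, and solving for $c_i$ yields $c_i = (T - 1/w_i)/\delta_i$.

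Next, I would impose the partition condition $\sum_{i=1}^k c_i = 1$. Substituting the expression for $c_i$ gives
\begin{equation*}
1 \;=\; \sum_{i=1}^k \frac{T-1/w_i}{\delta_i} \;=\; T\sum_{i=1}^k \frac{1}{\delta_i} \;-\; \sum_{i=1}^k \frac{1}{w_i\,\delta_i},
\end{equation*}
which I can rearrange to
\begin{equation*}
T \;=\; \frac{1 + \sum_{i=1}^k \frac{1}{w_i\,\delta_i}}{\sum_{i=1}^k \frac{1}{\delta_i}}.
\end{equation*}
The swarm speed is then $S = 1/T$, and taking the reciprocal produces exactly formula~\eqref{deta}.

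There is essentially no hard step here; the only thing to be careful about is making sure that the $c_i$ produced by this formula are nonnegative (so that the schedule is physically realizable), which is exactly the condition $T \ge 1/w_i$, equivalently $S \le w_i$, for every robot in the swarm. This is precisely the admissibility test maintained by the \SwarmSpeed{} procedure in its loop guard, so consistency with the algorithm is automatic. I would conclude by noting that the argument is independent of the order in which the segments $C_1,\ldots,C_k$ are laid out on $[0,1]$; only the per-robot time equation and the partition constraint enter the derivation.
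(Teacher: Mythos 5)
Your proposal is correct and follows essentially the same route as the paper's proof: write the per-robot time equation $\frac{c_i}{s_i}+\frac{1-c_i}{w_i}=\frac{1}{S}$, solve for $c_i$ in terms of $\delta_i$, and impose $\sum_i c_i=1$ to extract the speed. The only cosmetic difference is that you work with $T=1/S$ rather than $S$ directly, and your closing remark about nonnegativity of the $c_i$ matching the \SwarmSpeed{} loop guard is a sensible (if optional) addition.
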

\begin{proof}
The partition of the interval $[0, 1]$ into segments as
prescribed in the statement of the lemma gives rise to the equation
\begin{equation}
\label{det0}
c_1 + c_2 + \cdots + c_k  =  1.
\end{equation} 
Let $s$ be the speed of the swarm of $n$ robots.
Since all the
robots must reach the other endpoint $1$ of the interval 
at the same time, 
we have the following identities.
\begin{equation}
\label{det1}
\frac{c_i}{s_i} + \frac{1-c_i}{w_i} = \frac{1}{S}, \mbox{ for $1 \leq i  \leq k$,}
\end{equation}
where $\frac{c_i}{s_i}$ is the time spent searching and $ \frac{1-c_i}{w_i}$
the time spent walking by robot $r_i$.
Using the notation 
\begin{equation}
\label{det2}
\delta_i := \frac{1}{s_i} - \frac{1}{w_i},
\end{equation} 
and substituting into Equation~\eqref{det1},
after simplifications
we get
\begin{equation}
\label{det3}
c_i = 
\frac{1}{S\delta_i} - \frac{1}{w_i \delta_i}, 
\mbox{ for $1 \leq i \leq k$.}
\end{equation}
Using Equation~\eqref{det0} we see that
$$
1=\sum_{i=1}^k c_i 
=\sum_{i=1}^k \frac{1}{S\delta_i} - \sum_{i=1}^k \frac{1}{w_i \delta_i}
=\frac{1}{S} \sum_{i=1}^k \frac{1}{\delta_i} - \sum_{i=1}^k \frac{1}{w_i \delta_i},
$$
which implies Identity~\eqref{deta}, as desired.
\end{proof}

\begin{lemma}
\label{detlm2}
Algorithm  {\tt OnlineSearch} is correct (i.e. every point of 
the semiline $[0,+\infty )$ is searched by a robot).
\end{lemma}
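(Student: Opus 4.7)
The plan is to check that the subsegment lengths $c_j = \tfrac{1}{\delta_j}\bigl(\tfrac{1}{S}-\tfrac{1}{w_j}\bigr)$ used implicitly in lines~6 and~8--9 of \OnlineSearch{} are all strictly positive and sum to~$1$, and then to observe that the resulting search/walk cycle tiles every unit window of the semi-line.

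First I would unroll \SwarmSpeed{} to verify that its returned value is exactly the right-hand side of~\eqref{deta}: after $k$ iterations $S_{\mathit{num}}=\sum_{i=1}^{k}1/\delta_i$ and $S_{\mathit{den}}=1+\sum_{i=1}^{k}1/(w_i\delta_i)$, so by the derivation inside Theorem~\ref{online-speed} the $c_j$'s defined by~\eqref{det3} satisfy $\sum_{j=1}^{k}c_j=1$. For positivity I would argue by induction on the while-loop: the key short algebraic identity is $\tfrac{A+1/\delta_i}{1+B+1/(w_i\delta_i)}<w_i \;\Leftrightarrow\; \tfrac{A}{1+B}<w_i$, which shows that the entry condition $S_{\mathit{old}}<w_i$ of the loop is preserved after robot $r_i$ is incorporated. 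Since the robots are scanned in non-increasing walking-speed order, $w_i$ is the minimum walking speed seen so far, and therefore the final $S$ satisfies $S<w_j$ for every swarm robot $r_j$; this gives $c_j=\tfrac{1}{\delta_j}\bigl(\tfrac{1}{S}-\tfrac{1}{w_j}\bigr)>0$. Combining $c_j>0$ with $\sum_j c_j=1$ also forces $c_j\le 1$, so the walking lengths $1-c_j$ on line~9 are nonnegative.

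Coverage is then bookkeeping. In its $m$-th iteration ($m=0,1,2,\ldots$) robot $r_j$ searches the subsegment $\bigl[m+\sum_{i<j}c_i,\; m+\sum_{i\le j}c_i\bigr]$, which sits inside $[m,m+1]$ because $\sum_{i=1}^{k}c_i=1$. Varying $j$ over the swarm tiles $[m,m+1]$ completely (the searched pieces are interior-disjoint and their lengths add to $1$), and varying $m\ge 0$ exhausts the whole semi-line $[0,+\infty)$. The robots filtered out by the \exit{} branch (those with $w_j\le S$) stay at the origin and contribute nothing, so they cannot break coverage. I expect the invariant argument for positivity to be the only nontrivial step, since \SwarmSpeed{} a priori only guarantees $S<w_i$ at the moment $r_i$ is added, and one must confirm that subsequent updates of $S$ cannot push it above a previously-admitted robot's walking speed.
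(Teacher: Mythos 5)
Your proof is correct and rests on the same observation as the paper's: the subsegments $C_j(m)$ searched by the swarm robots tile each unit interval $[m,m+1]$ because the $c_j$ are positive and sum to $1$. The paper states this in one line and omits the positivity verification entirely, so your loop-invariant argument (that the update of $S$ in \SwarmSpeed{} preserves $S<w_i$, hence $S<w_j$ for every admitted robot $j$ by the non-increasing ordering of walking speeds) is a welcome and genuinely needed addition rather than a departure in method.
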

\begin{proof}
Let $C_j (i)$ denote the subsegment of $[i, i+1]$ of length $c_j$
which is searched by robot $r_j$. The lemma follows from the
observation that $\bigcup_{j=1}^k C_j (i) = [i, i+1]$, for all
$i \geq 0$ and all $j=1,2,\ldots ,k$.
\end{proof}

\ignore{
\begin{lemma}
\label{detlm3}
All $k$ robots visit any integer point on the semiline at the same time. 
\end{lemma}
\begin{proof}
This is immediate from Equation~\eqref{det1}.
\end{proof}
}

\ignore{
\begin{lemma}
\label{detlm4}
Let $A$ be an optimal searching algorithm for $n$ robots
having swarm speed $S$.
There is an integer $K>0$ such that when 
$[0,K]$ is searched by $A$ each robot $r_j$
such that $w_j < S$ is inside $[0, K-1]$.
In particular, all ``slow'' robots (having
walking speed less than the speed of the swarm)
must be dropped by the algorithm.
\end{lemma}
\begin{proof}
Observe that since $w_j < S$ we have that $\frac{1}{w_j} - \frac{1}{S} > 0$.
Therefore there exists an integer $K>0$ such that 
$\frac{K}{w_j} - \frac{K}{S} \geq 1$, namely
$$
K = \left\lceil \frac{1}{\frac{1}{S} - \frac{1}{w_j} } \right\rceil .
$$
It is now clear from the definition of $K$ that the lemma is valid.
\end{proof}

The following lemma shows that, all ``slow'' robots (having
walking speed less than the speed of the swarm of \OnlineSearch)
must be eventually dropped by an optimal algorithm.

\begin{lemma}
\label{dropslow}
Let $\cal{A}^*$ be a searching algorithm for $n$ robots, which 
has a  speed larger than $S$ - the speed of \OnlineSearch.
There is an integer $K>0$ such that 
no robot $r_j$ of $\cal{A}^*$
of walking speed $w_j \leq s$ searches a subsegment $S \subset [K, \infty]$.
\end{lemma}
\begin{proof}
Suppose that the speed of algorithm $\cal{A}^*$ is $S+\epsilon$, for some
$\epsilon>0$. We prove that the lemma holds for $K=\lceil s/\epsilon \rceil+1$.
Assume, to the contrary, that $r_j$ searches some subsegment of the
interval $[\lceil S/\epsilon \rceil+1, \infty]$. Let $p \geq K$ be a point
in the interior of this segment being searched uniquely by $r_j$ in 
algorithm $\cal{A}^*$. As the walking speed of $r_j$ is at most $S$ it takes at least time 
$p/s$ until point $p$ is searched by $\cal{A}^*$. Consider the speed $S_{\cal{A}^*}([0,\lceil p\rceil])$ of $\cal{A}^*$ in the interval
$[0,\lceil p\rceil]$. We have
$$
S_{\cal{A}^*}([0,\lceil p\rceil]) \leq \frac{\lceil p\rceil}{p/S} \leq  \frac{(p+1)S}{p} = S + \frac{S}{p} < S + \frac{S}{S/\epsilon} = S+ \epsilon
$$
Hence
$$
S_{\cal{A}^*} = \inf_{I_L} S_{\cal{A}^*}(I_L) \leq S_{\cal{A}^*}([0,\lceil p\rceil]) < s+ \epsilon
$$
contradicting the assumption of the lemma.
\end{proof}
}

\ignore{
\begin{theorem}
\label{optimal}
The searching algorithm \OnlineSearch is optimal.
\end{theorem}
\begin{proof}
Let $s$ be the speed of \OnlineSearch. Assume to the contrary that there exists an algorithm $\cal{A}^*$ of speed $s+\epsilon$. Let $K$ be an integer such that, according to Lemma~\ref{dropslow}, any robot $r_j$ of walking speed not greater than $s$ never performs searching within interval $[K, \infty]$. Let $t_K$ be the time of completing  the searching of interval $[0,K]$ by $\cal{A}^*$. Let $p_K$ be a point of the semiline of the lowest integer coordinate, such that no robot executing algorithm $\cal{A}^*$  never crossed $p_K$ (either  searching or walking). Observe that, if $p_K-t_Ks \leq 0$ we have a contradiction. Indeed, since up to time $t_K$ at most the segment $[0,p_K]$ could possibly be searched, this implies that $S_{\cal{A}^*}([0,p_K]) \leq s$. 

Assume then that $p_K-t_Ks > 0$. Choose integer $K^*$, such that
$$
K^* > \frac{(p_K-t_Ks)(s+\epsilon)}{\epsilon}
$$
and consider the contribution of each robot in searching the interval $[p_K, K^*]$. 

Let $\cal{A}^*$ be a searching algorithm for $n$ robots, which 
has a  speed larger than $s$ - the speed of \OnlineSearch.
There is an integer $K>0$ such that 
no robot $r_j$ of $\cal{A}^*$
of walking speed $w_j \leq s$ searches a subsegment $S \subset [K, \infty]$.

\end{proof}
}

\section{Competitiveness of the Online Searching}
\label{sec: competitiveness}

In this section we discuss the competitiveness of the \OnlineSearch~algorithm. 
Since competitive ratio is naturally discussed more often for cost optimization (minimization) problems, we assume in this section that we compare the finishing time (rather than speed) of the online versus offline solution. We show first that in the general case the \OnlineSearch~Algorithm is 2-competitive.

\begin{theorem}
\label{th: competitive ratio}
Consider any set of robots $r_1, r_2, \ldots, r_n$, ordered by a non-decreasing walking speed. If the completion time of the optimal schedule produced by the \ca~algorithm equals $T_\textrm{opt}$ then the completion time $T_\textrm{online}$ of the searching schedule produced by the \OnlineSearch~algorithm is such that $T_\textrm{online} < 2T_\textrm{opt}$.
\end{theorem}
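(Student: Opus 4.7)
My plan is first to convert the inequality $T_{\text{online}}<2T_{\text{opt}}$ into a speed comparison. For a positive integer $L$, the \OnlineSearch~schedule brings all its active robots to position $L$ simultaneously at time $L/S$ (where $S$ is the swarm speed from Theorem~\ref{online-speed}), so $T_{\text{online}}=L/S$, while Theorem~\ref{thm: opt solution to ssc1} gives $T_{\text{opt}}=L/S_{\text{opt}}$. The theorem thus reduces to the speed inequality $S_{\text{opt}}<2S$, and the rest of the proof establishes this.

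To prove $S_{\text{opt}}<2S$, I would partition the robots into the active set $A$ kept by \SwarmSpeed~(robots with $w_a>S$) and the inactive set $B=N\setminus A$ (robots with $w_r\le S$). Because the offline ordering is by non-decreasing walking speed, every member of $B$ precedes every member of $A$; unrolling Definition~\ref{def: search power} in this ordering produces the decomposition
\[
  S_{\text{opt}} \;=\; g(A) \;+\; g(B)\prod_{a\in A}\Bigl(1-\tfrac{s_a}{w_a}\Bigr).
\]
For the inactive contribution, the general bound $g(X)\le\max_{r\in X}w_r$ (no offline schedule on $[0,1]$ can finish before the fastest available walker traverses the interval) gives $g(B)\le S$; combined with $\prod_{a\in A}(1-s_a/w_a)\le 1$ this shows the inactive contribution is bounded by $S$.

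For the active part $g(A)$, I would build from the offline-optimum on $A$ a valid periodic (online-type) schedule of total period at most $2/g(A)$: an offline phase of length $1/g(A)$ followed by a reset phase of length at most $1/g(A)$ during which each active robot walks back to the appropriate starting position of the next period. Because \OnlineSearch~attains the optimal swarm speed among periodic schedules on $A$, this construction forces $1/S\le 2/g(A)$, i.e.\ $g(A)\le 2S$. Summing the two bounds naively would only yield $S_{\text{opt}}\le 3S$; the true strict bound $S_{\text{opt}}<2S$ requires a \emph{joint} analysis showing that the two summands cannot simultaneously approach their upper bounds. Concretely, in the limiting regime where $g(A)/S$ approaches~$2$---driven by an active robot whose $w_a$ is barely above $s_a$---the corresponding factor $1-s_a/w_a$ in the product collapses to~$0$, forcing the inactive contribution to vanish.

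The main obstacle is quantifying this trade-off. Both bounds, $g(A)\le 2S$ and $g(B)\prod_{a}(1-s_a/w_a)\le S$, are individually tight in certain asymptotic limits, and one must exhibit an explicit combined inequality that is never tight. I expect this step to use the closed-form expression for $S$ from Theorem~\ref{online-speed} and for $g(A)$ from Theorem~\ref{thm: opt solution to ssc1}, combined with the structural fact that the acceptance threshold $w_a>S$ relates the two formulas through the parameter $\delta_a=1/s_a-1/w_a$ appearing in both.
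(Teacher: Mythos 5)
Your reduction to the speed inequality $S_{\textrm{opt}}<2S$ and the decomposition $g(N)=g(A)+g(B)\prod_{a\in A}(1-s_a/w_a)$ with $g(B)\le\max_{r\in B}w_r\le S$ are fine, but the proof as written does not close. You candidly obtain only $S_{\textrm{opt}}\le 3S$ and defer the actual theorem to an unexecuted ``joint analysis'' of when the two summands can be simultaneously tight; that trade-off is precisely the content of the theorem, so as it stands this is a gap, not a proof. Moreover, the intermediate bound $g(A)\le 2S$ is itself shaky. The ``reset phase'' of your periodic construction requires each active robot to advance from its offline end position to the right endpoint within an extra $1/g(A)$ time units, i.e.\ it needs $w_a\gtrsim g(A)$ for every active robot; but membership in $A$ only guarantees $w_a>S$, and since $g(A)$ can be as large as $\max_a w_a$, a slow-walking active robot can make the reset phase much longer than one period. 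You also invoke the claim that \OnlineSearch\ ``attains the optimal swarm speed among periodic schedules on $A$,'' which is nowhere established in the paper (Theorem~\ref{online-speed} only evaluates the speed of the particular swarm that \SwarmSpeed\ builds), so even a repaired periodic construction would not immediately yield $1/S\le 2/g(A)$.

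The paper's own argument avoids all of this with a single doubling step. Supposing $T_{\textrm{online}}\ge 2T_{\textrm{opt}}$, it looks at the robots that, in the \ca\ schedule on the unit interval, search the subintervals $C_{i^*},\ldots,C_n$ covering $[1/2,1]$. Each such robot traverses at least distance $1/2$ in time $T_{\textrm{opt}}$, so $w_i\ge 1/(2T_{\textrm{opt}})\ge S$ and it is retained by \SwarmSpeed; scaling their offline trajectories by a factor of $2$ produces a swarm-type schedule in which these robots search pieces of total length $\sum_{i\ge i^*}2|C_i|>1$ and all reach the right endpoint by time $2T_{\textrm{opt}}$, contradicting the assumption. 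If you want to salvage your route, you would need to prove the missing combined inequality explicitly; the doubling argument is both shorter and sidesteps the need to bound $g(A)$ and $g(B)$ separately.
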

\begin{proof}
As \OnlineSearch~algorithm outputs schedules of the same speed for all integer-length segments it is sufficient to analyze its competitiveness for a unit segment. Assume, to the contrary, that the time $T_\textrm{online}$ of the schedule output by \OnlineSearch~is such that $T_\textrm{online} \geq 2T_c$. Note that, the swarm speed $S$ of the \OnlineSearch~is then at most $S \leq 1/(2T_\textrm{online})$.
Consider $C_1, C_2, \ldots, C_n$ - the subsegments searched by robots $r_1, r_2, \ldots, r_n$, respectively. Recall that each robot $r_i$ of the \ca~algorithm walks along segments $C_1, C_2, \ldots, C_{i-1}$ and searches $C_i$ arriving at its right endpoint at time $T_c{opt}$.
Let $i^*$ be the index such that the midpoint $1/2 \in C_{i^*}$ (or point $1/2$ is a common endpoint of $C_{i^*}$ and $C_{i^*+1}$). Observe, that in time $2T_\textrm{opt}$ each robot $r_i$, such that $i \geq i^*$ could reach the right endpoint of the unit segment, while searching its portion of length $2|C_i|$. Note that, as for each robot $r_i$, such that $i \geq i^*$, we have $w_i > 1/(2T_\textrm{online}) \geq S$, each such robot is used by \OnlineSearch~in lines 5-9. However, since $\sum_{i=i^*}^n 2|C_i| > 1$ all robots $r_i$, for $i \geq i^*$ search a segment longer than 1, arriving at its right endpoint within time $2T_\textrm{opt}$, or $T_\textrm{online} < 2T_\textrm{opt}$ for the unit segment. This contradicts the earlier assumption.
\end{proof}

Observe that, the competitive ratio of 2 may be approached as close as we want. Indeed, we have the following
\begin{proposition} For any $\epsilon >0$ there is a set of two robots for which the \OnlineSearch~algorithm produces a schedule of completion time $T_\textrm{online}$ such that $T_\textrm{online} = (2-\epsilon)T_\textrm{opt}$.
\end{proposition}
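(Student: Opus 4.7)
The plan is to exhibit a two-robot family, parameterized by $\eta$ and $W$, for which the online algorithm drops the slower robot entirely (its walking speed fails to exceed the swarm speed), while the offline algorithm extracts meaningful work from it. This will pin down $T_\textrm{online} = 1$ while driving $T_\textrm{opt}$ arbitrarily close to $1/2$.

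Concretely, I would consider two robots: robot $A$ with walking speed $w_A = 1$ and searching speed $s_A = 1 - \eta$, and robot $B$ with walking speed $w_B = W$ and searching speed $s_B = 1$, where $\eta \in (0,1)$ is small and $W$ is large. The first step is to trace through \SwarmSpeed. Since \OnlineSearch~processes robots in non-increasing order of walking speed, $B$ is considered first: after including $B$, the swarm speed becomes $s_B = 1$ (as can be verified directly from Theorem~\ref{online-speed} with a single robot). Then the loop test $S < w_A$ evaluates to $1 < 1$, which fails, so $A$ is excluded. Hence the online swarm consists solely of $B$ searching the unit segment at speed $1$, giving $T_\textrm{online} = 1$.

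Next, I would compute $T_\textrm{opt}$ using the offline algorithm \ca. Sorting by non-decreasing walking speed places $A$ first (searching some prefix of length $c_A$) and $B$ second (walking through $c_A$ then searching the remainder). Both robots must finish at time $T_\textrm{opt}$, which yields
\begin{equation*}
\frac{c_A}{1-\eta} \;=\; \frac{c_A}{W} + (1 - c_A),
\end{equation*}
from which a direct calculation gives $T_\textrm{opt} = 1/\bigl(1 + (1-\eta)(1 - 1/W)\bigr)$. Therefore the ratio is
\begin{equation*}
\frac{T_\textrm{online}}{T_\textrm{opt}} \;=\; 1 + (1-\eta)\left(1 - \tfrac{1}{W}\right) \;=\; 2 - \eta - \frac{1-\eta}{W}.
\end{equation*}

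To finish, given any target $\epsilon \in (0,1)$, I would choose, for instance, $\eta = \epsilon/2$ and $W = (1-\eta)/(\epsilon - \eta) = (2-\epsilon)/\epsilon$, which makes the displayed ratio equal to exactly $2 - \epsilon$. One checks that $s_A < w_A$ and $s_B < w_B$ hold, and that $W > 1$ for $\epsilon < 1$; for $\epsilon \geq 1$ the statement is trivial since any valid instance gives a ratio $\geq 1 > 2 - \epsilon$. The main subtlety — the only real obstacle — is the boundary behavior of the inclusion test $S < w_A$ in \SwarmSpeed: one must verify that the swarm speed after processing $B$ alone is exactly $1 = w_A$, so that $A$ is genuinely discarded by the online algorithm rather than marginally admitted, since admitting $A$ would increase $S$ and invalidate the calculation of $T_\textrm{online}$.
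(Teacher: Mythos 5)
Your proposal is correct and follows essentially the same route as the paper: the paper's witness instance is $s_1=1-\epsilon/2$, $w_1=1$, $s_2=1$, $w_2=(2-\epsilon)/\epsilon$, which is exactly your family after substituting $\eta=\epsilon/2$ and $W=(2-\epsilon)/\epsilon$, and both arguments rest on the swarm speed equalling $1=w_A$ so that the slow robot is dropped while the offline schedule still uses it. The only cosmetic difference is that you derive $T_\textrm{opt}$ from the equal-finishing-time equation directly, whereas the paper plugs into the closed-form search-power expression of Theorem~\ref{thm: opt solution to ssc1}; your explicit check of the boundary test $S<w_A$ in \SwarmSpeed{} is a welcome addition the paper only asserts.
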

\begin{proof}
Let the speeds of the two robots be $s_1=1-\epsilon/2, w_1 = 1, s_2=1, w_2=(2-\epsilon)/\epsilon$. As the swarm speed $S$ computed in \SwarmSpeed~procedure equals 1, the line 2 of the \OnlineSearch~algorithm excludes $r_1$ from the swarm, so the search is performed uniquely by $r_1$ with $T_\textrm{online}=1$.  Using Theorem~\ref{thm: opt solution to ssc1} we get 
$$
S_\textrm{opt}= \sum_{k=1}^{2} s_k \prod_{j=k+1}^{{2}} \left( 1 - \frac{s_j}{w_j} \right)=
 \left( 1 - \frac{\epsilon}{2} \right)  \left( 1 - \frac{\epsilon}{2-\epsilon} \right) +1=
 2-\epsilon
$$
Hence $T_\textrm{opt}=1/(2-\epsilon)$ and $T_\textrm{online} =1= (2-\epsilon)T_\textrm{opt}$
\end{proof}

The following theorem concerns the competitiveness of the \OnlineSearch~algorithm in the special case when all robot walking speeds are the same. 

\begin{theorem}\label{thm: summary of online w-uniform}
Let be given the collection of robots $r_1, r_2, \ldots, r_n$ with the same walking speed $w=w_1=\ldots = w_n$. The \OnlineSearch~algorithm has the competitive ratio $\alpha_n$ which is increasing in $n$. In particular, $\alpha_2 = 1.115$, $\alpha_3 \approx 1.17605$, $\alpha_4 \approx 1.20386$ and $\lim_{n \rightarrow \infty} \alpha_n \approx 1.29843$. 
\end{theorem}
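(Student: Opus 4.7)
The plan is to cast the competitive ratio as a closed-form expression in the robots' searching speeds, show that the worst-case instance is symmetric, and then study the resulting one-parameter optimization both for each fixed $n$ and in the limit $n\to\infty$.

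Normalize $w=1$ and introduce $x_i:=s_i/(1-s_i)\in[0,\infty)$, $X:=\sum_i x_i$, $P:=\prod_i(1+x_i)$. Corollary~\ref{cor: xs in w-uniform} yields $S_\textrm{opt}=(P-1)/P$, while plugging $\delta_i=1/s_i-1$ into Theorem~\ref{online-speed} gives $S_\textrm{online}=X/(1+X)$ (no robot is dropped in line~2 of \OnlineSearch\ because $S_\textrm{online}<1=w_i$). Hence the competitive ratio on an $n$-robot instance takes the closed form
\[
\alpha_n(x_1,\ldots,x_n) \;=\; \frac{S_\textrm{opt}}{S_\textrm{online}} \;=\; \frac{(P-1)(X+1)}{PX}.
\]

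Next I would locate the worst case. Logarithmic differentiation gives $\partial \ln\alpha_n/\partial x_i = 1/((1+x_i)(P-1))-1/(X(X+1))$, so every interior critical point satisfies $(1+x_i)(P-1)=X(X+1)$ for every $i$, forcing $x_1=\cdots=x_n$. Together with the boundary check $\alpha_n\to 1$ as all $x_i\to 0$ or any $x_i\to\infty$, this identifies the symmetric critical point as the global maximum. Setting $x_1=\cdots=x_n=x$ reduces the problem to the one-parameter optimization
\[
\alpha_n \;=\; \max_{x>0}\left(1+\frac{1}{nx}\right)\!\left(1-\frac{1}{(1+x)^n}\right),
\]
whose maximizer solves $(1+x)\bigl((1+x)^n-1\bigr)=nx(nx+1)$. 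The stated values for $n=2,3,4$ then follow by routine numerical root finding in this single-variable equation.

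For monotonicity, I would embed the $(n-1)$-optimum $(x^*,\ldots,x^*)$ into the $n$-robot instance as the boundary point $(x^*,\ldots,x^*,0)$, where by construction $\alpha=\alpha_{n-1}$. Using the $(n-1)$ critical-point identity $(1+x^*)(P^*-1)=X^*(X^*+1)$, the partial derivative into the interior at this boundary point evaluates to
\[
\frac{\partial \ln\alpha_n}{\partial x_n}\bigg|_{x_n=0}
\;=\; \frac{1}{P^*-1}-\frac{1}{X^*(X^*+1)}
\;=\; \frac{x^*}{X^*(X^*+1)} \;>\; 0,
\]
so a small perturbation with $x_n>0$ strictly beats $\alpha_{n-1}$, yielding $\alpha_n>\alpha_{n-1}$.

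Finally, for the limit, I would substitute $y=nx$ in the one-parameter formula to write $\alpha_n=\max_{y>0}(1+1/y)(1-(1+y/n)^{-n})$. Uniform convergence of $(1+y/n)^{-n}$ to $e^{-y}$ on compacts, together with a check that the maximizer stays in a bounded range of $y$, permits interchanging limit and max, giving $\alpha_\infty=\max_{y>0}(1+1/y)(1-e^{-y})$. The first-order condition collapses to the transcendental equation $e^y=1+y+y^2$, whose positive root $y\approx 1.79$ yields $\alpha_\infty=(1+y)^2/e^y\approx 1.29843$. The main obstacles will be rigorously verifying that the symmetric critical point is the global maximum (rather than a saddle) and controlling the maximizer's location uniformly in $n$ so as to justify the limit interchange.
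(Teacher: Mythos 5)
Your proposal is correct and follows the same overall skeleton as the paper: reduce the competitive ratio to a closed form in the searching speeds, show via first-order conditions that the worst case is the totally uniform instance, reduce to a one-parameter optimization, and take the limit through the scaling $s\sim c/n$ to get $\max_c(1+1/c)(1-e^{-c})\approx 1.29843$. Your closed form $\frac{(P-1)(X+1)}{PX}$ in the variables $x_i=s_i/(1-s_i)$ is exactly the paper's expression~\eqref{equa: upper bound comp ration W-uniform}, and the substitution makes the symmetry argument cleaner: your gradient condition $(1+x_i)(P-1)=X(X+1)$ forces $x_1=\cdots=x_n$ immediately, whereas the paper (Lemma~\ref{lem: ratio maximized for totally uniform}) solves messier partials in $s_i$ and then needs a separate argument that the common value lies in $(0,1)$. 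The genuinely different piece is monotonicity in $n$: the paper (Lemma~\ref{lem: comp ratio increases with robots}) treats $n$ as continuous, differentiates $f_n(\alpha_n)$ via the envelope relation, and separately proves that the maximizer $\alpha_n$ decreases by comparing the critical-point polynomials $r_n$ and $r_{n+1}$; you instead embed the $(n-1)$-optimum as the boundary point $(x^*,\ldots,x^*,0)$ and show the inward derivative $\frac{x^*}{X^*(X^*+1)}>0$. Your argument is more elementary, avoids the continuous-$n$ device, and does not even require knowing the global maximizer of the $n$-robot problem, so it is arguably preferable. Two remarks: (i) the gap you flag --- that the symmetric critical point is a global maximum and not a saddle, and that the supremum is attained in the interior --- is present in the paper's proof as well, which likewise only checks first-order conditions; the boundary behavior where some $x_i\to 0$ reduces to fewer robots and is handled by your monotonicity claim, but neither proof spells this out; (ii) your computation for $n=2$ gives $x=1$, i.e.\ $s=1/2$ and $\alpha_2=9/8=1.125$, which matches the paper's Table~\ref{tab: comp ratio 2,3,4} and Lemma~\ref{lem: comp ratio two extremes}; the value $1.115$ in the theorem statement is a typo.
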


Our strategy towards proving Theorem~\ref{thm: summary of online w-uniform} is to show that the competitive ratio of \OnlineSearch~-among all problem instances when walking speeds are the same - is maximized when all robots' searching speeds are also the same. Because of lack of space, the section~\ref{sec: online with w-uniform} related to the proof of Theorem \ref{thm: summary of online w-uniform} is entirely deferred to the Appendix.

\section{Conclusion and Open Problems}
\label{sec: Conclusion and Open Questions}

In this paper, we proposed and analyzed 
offline and online algorithms for addressing the 
beachcombers' problem. The offline algorithm, when the
size of the segment to search is known in advance is shown to 
produce the optimal schedule. The online searching algorithm is shown to be 2-competitive in general case and 1.29843-competitive when the agents' walking speeds are known to be the same. We conjecture that there is no online algorithm with the competitive ratio of $(2-\epsilon)$ for any $\epsilon >0$.

Other open questions concern different domain topologies, robots starting to search from different initial positions or the case of faulty robots. 

\bibliography{biblio}
\bibliographystyle{splncs}


\newpage

\begin{appendix}

\section{Appendix}

\subsection{Proof of Lemma~\ref{lem: all properties}}

\begin{proof} By the observation made in the preliminaries we assume that the segment may be partitioned into subsegments, such that each subsegment is searched by only one robot of the collection.

\textbf{(a)}~ Suppose a robot $r_i$  searches the non contiguous subintervals $[a_1,b_1]$ and $[a_2,b_2]$ (with $b_1 <a_2$), of the unit interval $[0,1]$. We modify the schedule so that robot searches the interval $[a_2 - (b_1-a_1),b_2]$. The time robot $r_i$ stops searching remains the same, as do the finishing times for the rest of the robots, once we shift the allocated searching intervals that fall between $[a_1,b_1]$ and $[a_2,b_2]$.

\textbf{(b)}~ Suppose some robot $r_i$ has an idle period before it searches its last allocated point of the domain. We can eliminate this period by switching the robot to a moving mode (either walking or searching) earlier, which reduces its individual finishing time. Hence, we may assume that all robots have idle times only after the time they finish searching. Now consider a robot $r_i$ that finishes searching its unique (due to part (a)) interval $[a,b]$ strictly earlier than the rest of the robots, by, say, $\epsilon$ time units. We can then reschedule robot $r_i$ so as to search $[a-\epsilon s_i/2, b+\epsilon s_i/2]$. Robots searching a preceding interval now search a subinterval that may have been shortened (but not lengthened), and they do not walk more. Robots that search succeeding intervals may have their searching intervals shortened, in which case they may need to process some subinterval by walking instead of searching. Since for each robot the walking speed is strictly higher than the searching speed, this process can only reduce the total finishing time. The argument is similar if $[a,b]$ above lies in one of the endpoints of the domain $[0,1]$. 

\textbf{(c)}~ This is true, since otherwise a robot would have 0 searching time which would contradict part (b). 

\textbf{(d)}~
By part (a) and (c) above, the domain is partitioned into subintervals of length $c_1,\ldots, c_n$ with the understanding that $c_i$ is searched by robot $r_i$. 

In what follows, we investigate the effect of switching the order of two robots that search two consecutive subintervals, so that the union of the intervals remains unchanged. In particular we will redistribute the portion of the union of the two intervals that each robot will search, enforcing the optimality condition of part (b). Since we will only redistribute the length of intervals $i,j$ to robots $i,j$, the rest of the subintervals will remain the same, and so will the finishing search times of the remaining robots. 

Without loss of generality, assume that interval $c_i$ lies in the leftmost part of the domain from which all robots start (we may assume this since any preceding robots will not be affected as we maintain the union of the intervals that both robots $r_i,r_j$ will search together). Note that robot $r_i$ searches $c_i$ while robot $r_j$ walks $c_i$ and searches $c_j$. By part (b) all robots have the same finishing time, so we have
\begin{equation}
 \frac{c_i}{s_i} = \frac{c_i}{w_j} + \frac{c_j}{s_j} \label{equa: equal fin time for two}
\end{equation}
If $c_i = \lambda (c_i+c_j)$ (in which case $c_j = (1-\lambda)(c_i+c_j)$), then substituting in ~\eqref{equa: equal fin time for two} and solving for $\lambda$ gives 
$$ \lambda = \frac{1}{\left(\frac{1}{s_i}+\frac{1}{s_j}-\frac{1}{w_i}\right) s_j}.$$
Hence, we conclude that the finishing time for both robots is
$$T = \frac{\lambda (c_i+c_j)}{s_i}
	=\frac{c_i+c_j}{\left(\frac{1}{s_i}+\frac{1}{s_j}-\frac{1}{w_i}\right) s_i s_j}.$$
We now reschedule the robots so that robot $r_j$ searches first, say a $\mu$ portion of $c_i+c_j$, and robot $r_i$ searches the remaining (and second in order) subinterval of length $(1-\mu)(c_i+c_j)$. This means that robot $r_i$ will now walk the interval of length $\mu(c_i+c_j)$. Since by part (b) the two robots must finish simultaneously, the same calculations show that the new finishing time is 
$$T' = \frac{\mu (c_i+c_j)}{s_j}
	=\frac{c_i+c_j}{\left(\frac{1}{s_i}+\frac{1}{s_j}-\frac{1}{w_j}\right) s_i s_j}.$$
It is easy to see then that $T' < T$ whenever $w_i > w_j$, concluding what we need.
\end{proof}

\subsection{Online Searching with Robots of Equal Walking Speeds}\label{sec: online with w-uniform}

We call {\em w-uniform} the instance of the \bp~in which all agents have the same walking speeds. Moreover if the searching speeds are the same - the problem is {\em totally uniform}. Clearly all $n$ robots participate in the swarm of the \OnlineSearch~algorithm. Considering the speeds of the schedules obtained by the offline and online algorithms given by Theorem~\ref{thm: opt solution to ssc1} and Theorem~\ref{online-speed}, the upper bound on the competitive ratio $C$ of the \OnlineSearch~algorithm is given by
$$
C= \sup_{R}\frac{T_{\textrm{online}}}{T_{\textrm{opt}}}=
\sup_{R}\left( \sum_{k=1}^{n} s_k \prod_{j=k+1}^{{n}} \left( 1 - \frac{s_j}{w_j} \right)\right)  /
\left( \frac{ \sum_{k=1}^n \frac{1}{\delta_k}}{1+ \sum_{k=1}^n \frac{1}{w_k\delta_k}}\right) 
$$
where the supremum of the ratio is taken over all configurations $R$ of robots' speeds and $\delta_i := \frac{1}{s_i} - \frac{1}{w_i}$, for $i=1,2,\ldots ,k$. 

Observe that this ratio remains the same if the instance of the problem is scaled down to all walking speeds equal to 1. Then the simple calculation shows that the value of the competitive ratio is simplified to

\begin{equation}\label{equa: upper bound comp ration W-uniform}
C= \sup_{R} 
\left( 1-\prod_{k=1}^n(1-s_k) \right)
\left(
1 + 
\frac{1}
{\sum_{k=1}^n\frac{s_k}{1-s_k}}
\right)
\end{equation}

In what follows we compute a numeric upper bound for~\eqref{equa: upper bound comp ration W-uniform}. Such a task seems challenging as it involves $n$ many parameters, i.e. the searching speeds. As the expression is symmetric in the parameters, one should expect that it is maximized when all parameters are the same. Effectively, this would mean that competitive ratio of our algorithm is worst only for totally uniform  instances, i.e. where all searching speeds are the same and all walking speeds are the same. This is what we make formal in the next technical lemma. 

\begin{lemma}\label{lem: ratio maximized for totally uniform}
Given some fixed $n$, expression~\eqref{equa: upper bound comp ration W-uniform} is maximized for totally uniform instances of \bp. 
\end{lemma}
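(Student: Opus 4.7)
My plan is to reduce the multivariable optimization of~\eqref{equa: upper bound comp ration W-uniform} to a one-variable problem via a symmetrization that equalizes a pair of searching speeds while preserving a key invariant. Introduce the substitution $p_k := 1-s_k \in (0,1)$ and set
\[
P := \prod_{k=1}^n p_k, \qquad Q := \sum_{k=1}^n \frac{1}{p_k}.
\]
A direct rewrite of the bracket in~\eqref{equa: upper bound comp ration W-uniform} then gives
\[
f(s_1,\ldots,s_n) \;=\; (1-P)\,\frac{Q-n+1}{Q-n}.
\]
Since each $p_k<1$ forces $1/p_k>1$, we have $Q>n$, so the expression above is well-defined. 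The crucial observation is that $f$ depends on the tuple $(p_1,\ldots,p_n)$ \emph{only} through the two symmetric quantities $P$ and $Q$.

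The next step is to hold $P$ fixed and study how $f$ behaves as a function of $Q$. Because $Q-n>0$, the map $Q\mapsto 1+1/(Q-n)$ is strictly decreasing, so for fixed $P$ the expression $f$ is strictly decreasing in $Q$. Thus, to maximize $f$ on the level set $\{P=\text{const}\}$, one must minimize $Q$ on that level set. Setting $q_k:=1/p_k>0$, the constraint $\prod p_k=P$ becomes $\prod q_k = 1/P$, and by the AM--GM inequality
\[
Q \;=\; \sum_{k=1}^n q_k \;\geq\; n\Bigl(\prod_{k=1}^n q_k\Bigr)^{1/n} \;=\; \frac{n}{P^{1/n}},
\]
with equality if and only if all $q_k$ are equal, i.e.\ $p_1=\cdots=p_n = P^{1/n}$. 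Therefore, on each level set $\{P=\text{const}\}$, the unique maximizer of $f$ is the totally uniform configuration with common value $s_k = 1 - P^{1/n}$.

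To conclude, given any instance $(s_1,\ldots,s_n)$ with product $P=\prod(1-s_k)$, the totally uniform instance $(1-P^{1/n},\ldots,1-P^{1/n})$ lies in the same level set and yields a value of $f$ no smaller (and strictly larger unless the original instance was already uniform). Taking suprema over all instances, the supremum in~\eqref{equa: upper bound comp ration W-uniform} is attained on the one-parameter family of totally uniform instances, proving the lemma.

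I do not anticipate a significant obstacle: the whole argument boils down to (i) noticing that $f$ is a function of $(P,Q)$ alone, (ii) checking monotonicity of $f$ in $Q$, and (iii) applying AM--GM to minimize $Q$ for fixed $P$. The only place that requires minor care is the inequality $Q>n$, which justifies dividing by $Q-n$ and dictates the sign of the monotonicity; this is immediate from $s_k \in (0,1)$, equivalently $p_k\in(0,1)$.
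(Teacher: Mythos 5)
Your proof is correct, and it takes a genuinely different route from the paper's. The paper proceeds by brute-force calculus: it computes the partial derivatives $\vartheta f/\vartheta s_i$, sets them to zero, solves to find that every critical point must have all $s_i$ equal to a common value expressed in terms of the symmetric quantities $P=\prod_k(1-s_k)$ and $\sum_k s_k/(1-s_k)$, and then spends most of its effort verifying that this common value actually lies in $(0,1)$ (splitting into cases according to whether $\sum_k s_k/(1-s_k)$ exceeds $1$). Your argument instead observes that, with $p_k=1-s_k$, the objective is a function of the two symmetric quantities $P=\prod_k p_k$ and $Q=\sum_k 1/p_k$ alone, namely $(1-P)\bigl(1+\tfrac{1}{Q-n}\bigr)$; it is strictly decreasing in $Q$ for fixed $P$ (since $Q>n$), and AM--GM minimizes $Q$ on each level set $\{P=\mathrm{const}\}$ precisely at the uniform point $p_k=P^{1/n}$, which does lie in the admissible region since $P^{1/n}\in(0,1)$. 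This is a clean domination argument: every instance is beaten (weakly) by an explicit totally uniform instance with the same $P$. Beyond being shorter, your approach is actually more rigorous than the paper's, which only establishes a first-order necessary condition at interior critical points and does not address whether the supremum might be approached at the boundary of $(0,1)^n$ rather than attained at a stationary point; your level-set comparison sidesteps that issue entirely, since it directly yields $\sup_R f=\sup_{\textrm{uniform}} f$ without assuming attainment.
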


\begin{proof}
Consider the function $f: (0,1)^n \mapsto \reals_+$ defined as 
$$ f(s_1,s_2, \ldots, s_n) = \left( 1-\prod_{k=1}^n(1-s_k) \right)
\left(
1 + 
\frac{1}
{\sum_{k=1}^n\frac{s_k}{1-s_k}}
\right).
$$
A necessary condition for optimality is that $\frac{\vartheta f}{\vartheta s_i} = 0$ for $i=1,\ldots, n$. Towards computing the partial derivatives we introduce the shorthands
$$ Q:= {\sum_{k=1}^n\frac{s_k}{1-s_k}} ~~~\textrm{and}~~~
P := \prod_{k=1}^n(1-s_k),$$
and we observe that 
$$
\frac{\vartheta }{\vartheta s_i} Q = 
\frac{\vartheta }{\vartheta s_i} {\frac{s_i}{1-s_i}} = \frac{1}{(1-s_i)^2}
$$
and that
$$
\frac{\vartheta }{\vartheta s_i} P
=
- \prod_{k=1,\ldots, n ~\&~ k \not = i}^n(1-s_k)
=
{- \frac{1}{1-s_i}} P.
$$
Then, we easily get that 
$$
\frac{\vartheta f}{\vartheta s_i}
=
\frac{1}{(1-s_i)^2}\frac{1}{Q^2}+\frac{1}{1-s_i}P - \frac{-\frac{1}{1-s_i}Q-P\frac{1}{(1-s_i)^2}}{Q^2}.
$$
Requiring that the above partial derivative identifies with 0 and solving for $s_i$ gives
$$ s_i = 1 - \frac{1+P}{Q^2P+Q}.$$
Note that this already shows that all $s_i$ are equal when $f$ is maximized. In order to complete the proof, we need to show that these values of $s_i$ are indeed between 0 and 1. For this we first note that $s_i<1$ since $P,Q>0$, and hence it suffices to show that $s_i$ are positive when $f$ is maximized. 

In order to show that $s_i >0$, we observe that if $Q\geq 1$, then it can be easily seen that $1 - \frac{1+P}{Q^2P+Q} \geq 0$ (independently of the value of $P$). This is because $s_i \geq 0$ if and only if $-1-P+Q+P Q^2 \geq 0$. The later quadratic in $Q$ has 1 as the higher root and therefore is strictly positive for the values of $Q$ that exceed 1. 

It remains to show that $s_i>0$ in the case when $Q<1$. For this we do the following trick. Since $ s_i = 1 - \frac{1+P}{Q^2P+Q}$, we also have $ 1- s_i = \frac{1+P}{Q^2P+Q}$ and so
$$\frac{s_i}{1-s_i} = \frac{1 - \frac{1+P}{Q^2P+Q}}{\frac{1+P}{Q^2P+Q}}.$$
Summing the left-hand-side over $i=1,\ldots,n$ gives exactly $Q$,  so we conclude that for the values of $s_i$ that optimize $f$ we have 
\begin{equation}\label{equa: relation QP when f optimal}
Q = n \frac{Q^2P+Q-1-P}{1+P}.
\end{equation}
Let then $P=\frac{n(1-Q)+Q}{n(Q^2-1)+Q}$, i.e. the value as indicated after we solve for $P$ in \eqref{equa: relation QP when f optimal}. We get then that since $s_i = (-1-P+Q+P Q^2)/(Q^2P+Q)$, its numerator can be written as an expression in $n,Q$ as
$$(1-Q) \left( \frac{n(1-Q)+Q}{n(1-Q^2)+Q}(1+Q) -1\right)$$
One can easily see that $\frac{n(1-Q)+Q}{n(1-Q^2)+Q}$ decreases with $n$, and hence the expression we want to show to be non negative is at least 
$$(1-Q) \left( \lim_{n \rightarrow \infty} \frac{n(1-Q)+Q}{n(1-Q^2)+Q}(1+Q) -1\right) = 0$$
exactly as wanted.
\end{proof}

To conclude, Lemma~\ref{lem: ratio maximized for totally uniform} says that in order to determine the competitive ratio of our algorithm for general w-uniform case, it suffices to consider totally uniform instances. This is what we do in the next subsection.

\subsubsection{Online Searching for Totally Uniform Instances (Proof of Theorem~\ref{thm: summary of online w-uniform})}\label{sec: online for totally uniform}

For the sake of notation ease, we normalize all speeds so as to have uniform walking speeds 1 and uniform searching speeds $\alpha$. 

Following the analysis for w-uniform case, we know that the competitive ratio of our algorithm for the totally uniform instance as described above is 
\begin{equation}\label{equa: def of f(n,a)}
 \frac{T_{\textrm{online}}}{T_{\textrm{opt}}} \leq \frac{\left(\alpha (n-1)+1\right)\left( 1-(1-\alpha)^n \right)}{\alpha n} ~~~~~~~(~:=f_n(\alpha)~ )
\end{equation}
As already indicated, we denote the above expression on $\alpha,n$ by $f_n(\alpha)$. From now on we think of $f_n(\alpha)$ as the competitive ratio of the \OnlineSearch~algorithm. Table~\ref{tab: comp ratio 2,3,4} is easy to establish using elementary calculations and shows the competitive ratio for small number of robots.
\begin{table}[center]
\centering
\begin{tabular}{|l|l|l|}
\hline 
$n$ & $\max_{\alpha} f_n(\alpha)$ & $\argmax_{\alpha} f_n(\alpha)$ \\
\hline 
2 & $\frac{9}{8} = 1.125$ & $\frac{1}{2}=0.5$\\
3 & $\frac{172+7 \sqrt{7}}{162}  \approx 1.17605$ & $\frac{5-\sqrt{7}}{6}  \approx 0.392375$ \\
4 &  $\approx 1.20386$ & $\frac{1}{12} \left(11-\frac{9}{\left(85-4 \sqrt{406}\right)^{1/3}}-\left(85-4 \sqrt{406}\right)^{1/3}\right) \approx 0.322472 $ \\
\hline
\end{tabular}
\caption{Competitive ratio of the \OnlineSearch~Algorithm for the collections of robots of size 2,3,4.}
\label{tab: comp ratio 2,3,4}
\end{table}
In what follows we give a detailed analysis of the competitive ratio. For this we need the next technical lemma. 
\begin{lemma}\label{lem: comp ratio increases with robots}
Let $\alpha_n = \argmax f_n(\alpha)$. Then $f_n(\alpha_n)$ increases with $n$.
\end{lemma}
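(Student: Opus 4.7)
The plan is to prove the stronger pointwise inequality $f_{n+1}(\alpha_n) \geq f_n(\alpha_n)$; the lemma then follows immediately from the optimality of $\alpha_{n+1}$, since
\[
f_{n+1}(\alpha_{n+1}) \;\geq\; f_{n+1}(\alpha_n) \;\geq\; f_n(\alpha_n).
\]

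First, I would rewrite $f_n$ in a cleaner polynomial form. Setting $y := 1-\alpha$ and using $(1-y^n)/(1-y) = 1 + y + \cdots + y^{n-1}$, a short calculation starting from \eqref{equa: def of f(n,a)} yields
\[
f_n(y) \;=\; 1 + \frac{1}{n}\sum_{k=1}^{n-1} y^k - \frac{n-1}{n}\,y^n.
\]
Since $f_n(0) = f_n(1) = 1$ while $f_n(y) > 1$ on $(0,1)$, the maximizer $y_n := 1 - \alpha_n$ lies in the open interval and satisfies $f_n'(y_n) = 0$. Computing $f_n'(y) = \frac{1}{n}\sum_{k=1}^{n-1} k y^{k-1} - (n-1) y^{n-1}$ and multiplying by $y_n$, this condition becomes the clean identity
\begin{equation}\label{eq: focA-sketch}
\sum_{k=1}^{n-1} k\, y_n^{k} \;=\; n(n-1)\, y_n^{n}.
\end{equation}

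Next, I would compute $f_{n+1}(y) - f_n(y)$ directly from the polynomial form. Using the reindexing identity $\sum_{k=1}^{n-1} y^k(1 - y^{n-k}) = (1-y)\sum_{m=1}^{n-1} m\, y^m$ (obtained by expanding each $(1-y^{n-k})/(1-y)$ as a geometric sum and swapping the order of summation), the difference collapses to
\[
f_{n+1}(y) - f_n(y) \;=\; \frac{1-y}{n(n+1)}\!\left( n^2\, y^n - \sum_{m=1}^{n-1} m\, y^m \right).
\]
The decisive step is to evaluate this at $y = y_n$ and invoke \eqref{eq: focA-sketch}: the parenthesized expression becomes $n^2 y_n^n - n(n-1) y_n^n = n\, y_n^n$, so
\[
f_{n+1}(y_n) - f_n(y_n) \;=\; \frac{(1-y_n)\, y_n^{\,n}}{n+1} \;>\; 0,
\]
which yields the required strict increase.

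The main obstacle is purely algebraic: obtaining the factored form of $f_{n+1}-f_n$ in which the sum $\sum_{m=1}^{n-1} m\,y^m$ appears explicitly, so that the first-order condition at $y_n$ matches it term by term. Once the correct reindexing is spotted, no case analysis, derivative bound, or limit argument is needed—positivity is a one-line consequence of the critical-point equation for $\alpha_n$.
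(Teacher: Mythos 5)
Your proof is correct; I checked the algebra. With $y=1-\alpha$ one indeed gets $f_n=1+\frac{1}{n}\sum_{k=1}^{n-1}y^k-\frac{n-1}{n}y^n$, which is $1$ at $y=0,1$ and $>1$ in between, so the maximizer $y_n$ is interior and satisfies $\sum_{k=1}^{n-1}k\,y_n^k=n(n-1)y_n^n$; expanding $(1-y)\bigl(n^2y^n-\sum_{m=1}^{n-1}m\,y^m\bigr)$ does recover $n(n+1)\bigl(f_{n+1}(y)-f_n(y)\bigr)$ exactly, giving $f_{n+1}(y_n)-f_n(y_n)=\frac{(1-y_n)y_n^n}{n+1}>0$ and hence $f_{n+1}(\alpha_{n+1})\geq f_{n+1}(\alpha_n)>f_n(\alpha_n)$. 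This is a genuinely different route from the paper's. The paper stays in the variable $\alpha$: it shows the critical-point equation $r_n(\alpha)=(1-\alpha)^{n-1}\bigl(1+\alpha(n-1)+\alpha^2n(n-1)\bigr)-1=0$ has a unique root $\alpha_n$ with $\alpha_n>1/(n-1)$ for $n\geq 5$, substitutes the first-order condition to write $f_n(\alpha_n)=\frac{(1+\alpha_n(n-1))^2}{1+\alpha_n(n-1)+\alpha_n^2n(n-1)}$, differentiates this in $n$ (treated as a continuous parameter) to show its sign is opposite to that of $\frac{d\alpha_n}{dn}$, and finally proves $\alpha_{n+1}<\alpha_n$ by comparing $r_{n+1}$ with $r_n$, handling $n\leq 4$ separately via the table of values. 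Your pointwise comparison at $\alpha_n$ buys a shorter and more elementary argument: no uniqueness analysis, no continuous-in-$n$ differentiation, no small-$n$ case split, and an explicit positive lower bound on the increment. What it does not deliver are the side facts the paper's proof establishes along the way --- uniqueness of the maximizer and the monotonicity $\alpha_{n+1}<\alpha_n$ --- which are not needed for this lemma but would still require the paper's analysis of $r_n$ if one wanted them.
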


\begin{proof}
Note that Table~\ref{tab: comp ratio 2,3,4} already shows the lemma for $n\leq 4$. Hence, below we may assume that $n\geq 5$. First we show that $f_n(a)$ has a unique maximizer when $\alpha \in (0,1)$. For this we examine the critical points of $f_n(a)$ by solving $\frac{d~f_n(\alpha) }{d~\alpha} = 0$, i.e. 
\begin{equation}\label{equa: root wrt to a}
r_n(\alpha):= (1-\alpha)^{n-1} \left( 1+\alpha(n-1) + \alpha^2n(n-1) \right) - 1 =0.
\end{equation}
To show that $r_n(\alpha)$ has a unique solution in $(0,1)$ we again take the derivative with respect to $\alpha$ to find that $\frac{d~r_n(\alpha) }{d~\alpha} = (1-\alpha)^{n-2}\alpha n (n-1) (1-\alpha(n+1))$. This means that $r_n(\alpha)$ increases when $\alpha <1/(n+1)$ and decreases otherwise. Noting also that $r_n(0)=0$ and $r_n(1)=-1$, we conclude that $r_n(\alpha)$ has a unique root in $(0,1)$, i.e. $f_n(\alpha)$ has a unique maximizer $\alpha_n$ over $\alpha \in (0,1)$, and in particular $\alpha_n > \frac{1}{n+1}$. Next we will provide a slightly better bound on the roots of $r_n(\alpha)$. For this we observe that 
$$ r_n\left(\frac{1}{n-1} \right)  = \frac{\left(\frac{n-2}{n-1}\right)^n (3 n-2)}{n-2} - 1$$
which can be seen to be positive for $n\geq 5$. 
Hence, by the monotonicity we have already shown for $r_n(\alpha)$, we may assume that its root $\alpha_n$ satisfies
\begin{equation}\label{equa: bound on root}
\frac{1}{n-1}< \alpha_n < 1.
\end{equation}
Now we turn our attention to $f_n(a_n)$. Since $\alpha_n$ satisfies $r_n(\alpha_n)=0$, it is easy to see that
$$ f_n(a_n) =  \frac{(1+\alpha_n (n-1))^2}{1+\alpha_n (n-1)+\alpha_n^2n(n-1)} $$
simply by substituting $(1-\alpha)^{n-1}$ from \eqref{equa: root wrt to a} into~\eqref{equa: def of f(n,a)}. Recalling that $a_n$ is also a function on $n$ we get that
$$ \frac{d~f_n(\alpha_n) }{d~n}
=
\frac{\alpha_n(1-\alpha_n)}{\left( 1+\alpha_n(n-1)+\alpha_n^2 n (n-1)\right)^2}\left(1-\alpha_n^2(n-1)^2\right)
\frac{d~\alpha_n }{d~n}
$$
Due to~\eqref{equa: bound on root} we conclude that $\frac{d~f_n(\alpha_n) }{d~n}$ has the opposite sign of  $\frac{d~\alpha_n }{d~n}$, i.e. for $n\geq 5$ we have that $f_n(\alpha_n)$ increases with $n$ if and only if $\alpha_n$ decreases with $n$. So it remains to show that the roots $\alpha_n$ of $r_n(\alpha)=0$ decrease with $n$. Observe here that at this point we may restrict consideration to integral values of $n$. 

To conclude the lemma we argue that $\alpha_{n+1}<\alpha_{n}$. For this we observe that 
$$
\frac{r_{n+1}(\alpha)+1}{r_{n}(\alpha)+1} 
=
\frac{(1-\alpha)(1+\alpha n+\alpha^2n(n+1))}{1+\alpha(n-1)+\alpha^2n(n-1)}
$$
which is clearly less than 1 for $\alpha>\frac{1}{n+1}$ (just by solving for $\alpha$). Effectively this means that the graph of $r_{n}(\alpha)+1$ is above the graph of $r_{n+1}(\alpha)+1$ for every $\alpha > \frac{1}{n+1}$, and hence 
$$ r_n(\alpha_{n+1}) > r_{n+1}(\alpha_{n+1}) = 0 = r_n(\alpha_n) > r_{n+1}(\alpha_n).$$
But then, from the monotonicity we have shown for $r_n(\alpha)$ this implies that $\alpha_{n+1}<\alpha_n$ as wanted. 
\end{proof}

The next lemma in combination with Lemma~\ref{lem: ratio maximized for totally uniform} prove Theorem~\ref{thm: summary of online w-uniform}. 

\begin{lemma}\label{lem: comp ratio two extremes}
For the totally uniform instances of the \bp, the \OnlineSearch~algorithm has competitive ratio at most 9/8 for two robots, and the competitive ratio of at most $\max_c \{ (1+1/c)(1-e^{-c})\} \approx 1.29843$ for any number of robots. 
\end{lemma}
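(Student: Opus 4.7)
The statement splits into two bounds that I plan to handle separately. For $n=2$, I would substitute into~\eqref{equa: def of f(n,a)} to obtain $f_2(\alpha) = (1+\alpha)(2-\alpha)/2$, a concave quadratic whose unique maximizer on $(0,1)$ is $\alpha = 1/2$ with value $9/8$; this coincides with the first row of Table~\ref{tab: comp ratio 2,3,4}, so the first bound is immediate.

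For the universal bound, set $M := \sup_{c>0}(1+1/c)(1-e^{-c}) \approx 1.29843$. The main idea is to reparametrize the search speed through $c := -n\log(1-\alpha) > 0$, so that $(1-\alpha)^n = e^{-c}$ exactly. Substituting $\alpha = 1 - e^{-c/n}$ into~\eqref{equa: def of f(n,a)}, the factor $1-(1-\alpha)^n$ becomes the $n$-free quantity $1-e^{-c}$, while a short algebraic simplification rewrites the remaining prefactor $(\alpha(n-1)+1)/(\alpha n)$ as
$$1 - \frac{1}{n} + \frac{1}{n(1-e^{-c/n})}.$$
All of the $n$-dependence is now confined to this single prefactor.

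The key step is then to show that this prefactor is bounded above by $1 + 1/c$. Setting $y := c/n$, this inequality reduces to $\frac{e^{-y}}{1-e^{-y}} \leq \frac{1}{y}$, which after clearing denominators is equivalent to the classical tangent-line inequality $(1+y)e^{-y} \leq 1$ (i.e.\ $e^y \geq 1+y$), valid for all $y > 0$. Chaining with the factor $1 - e^{-c}$ yields $f_n(\alpha) \leq (1+1/c)(1-e^{-c}) \leq M$ for every $n \geq 2$ and every $\alpha \in (0,1)$, and taking the supremum over $\alpha$ gives the claim.

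The main obstacle — and the only non-routine step — is identifying the change of variable $c = -n\log(1-\alpha)$; this is what cleanly decouples the $n$-th power $(1-\alpha)^n$ from the limiting exponential and reduces the whole bound to an elementary convexity inequality. An alternative route I considered is to invoke Lemma~\ref{lem: comp ratio increases with robots} and compute $\lim_{n\to\infty}\max_\alpha f_n(\alpha)$ by a Taylor/limit argument on the optimizers $\alpha_n$, but that approach is messier because it requires simultaneous control of the limiting function and the rate at which $\alpha_n \to 0$, whereas the substitution above yields the bound uniformly in $n$ in a single stroke.
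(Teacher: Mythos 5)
Your proof is correct, but it takes a genuinely different route from the paper's. The paper handles $n=2$ via the root equation $(1-\alpha_2)(1+\alpha_2+2\alpha_2^2)=1$ inherited from Lemma~\ref{lem: comp ratio increases with robots}, and for general $n$ it invokes that lemma's monotonicity of $f_n(\alpha_n)$ in $n$ to reduce everything to computing $\lim_{n\to\infty}f_n(\alpha_n)$, which it evaluates by a case split on whether $\alpha_n$ is $o(1/n)$, $\omega(1/n)$, or of the form $c/n$. Your substitution $c=-n\log(1-\alpha)$ together with the tangent-line inequality $e^y\ge 1+y$ replaces that entire limiting argument with a single uniform estimate $f_n(\alpha)\le(1+1/c)(1-e^{-c})$ valid for every finite $n$ and every $\alpha\in(0,1)$; I checked the algebra (the prefactor identity and the reduction to $(1+y)e^{-y}\le 1$) and it is sound, as is your direct maximization of the concave quadratic $f_2(\alpha)=(1+\alpha)(2-\alpha)/2$. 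What your approach buys is self-containedness and rigor: it does not depend on Lemma~\ref{lem: comp ratio increases with robots} at all, and it sidesteps the paper's asymptotic case analysis, which is slightly informal (it tacitly assumes the sequence $\alpha_n$ has a definite asymptotic order relative to $1/n$, which strictly speaking requires a compactness or subsequence argument). What the paper's approach buys is the complementary information used in Theorem~\ref{thm: summary of online w-uniform}, namely that the ratio actually increases to $\approx 1.29843$, i.e.\ the bound is asymptotically tight --- your argument establishes only the upper bound, which is all the lemma as stated requires.
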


\begin{proof}
By the proof of Lemma~\ref{lem: comp ratio increases with robots}, $a_2$ satisfies $(1-\alpha_2)(1+\alpha_2 +2\alpha_2^2)=1$, which has the unique solution $\alpha_2 = 1/2$. It is easy to see then $f_2(1/2)=9/8$. 

Next, by Lemma~\ref{lem: comp ratio increases with robots}, the bigger is the number of robots, the higher is the competitive ratio of our algorithm. Hence, we need to determine $\lim_{n \rightarrow \infty} f_n(a_n)$. 

To that end we note that if $a_n = o(1/n)$, then $1-(1-\alpha)^n \approx \alpha n$, and so 
$$ \lim_{n \rightarrow \infty} 
 \frac{\left(\alpha (n-1)+1\right)\left( 1-(1-\alpha)^n \right)}{\alpha n}
= \lim_{n \rightarrow \infty} \frac{\alpha n - \alpha + 1}{\alpha n} \alpha n = 1.
$$
Similarly, if $a_n = \omega(1/n)$, then $1-(1-\alpha)^n$ tends to 1 as $n$ goes to infinity. Consequently, 
$$ \lim_{n \rightarrow \infty} 
 \frac{\left(\alpha (n-1)+1\right)\left( 1-(1-\alpha)^n \right)}{\alpha n}
= \lim_{n \rightarrow \infty} \frac{\alpha n - \alpha + 1}{\alpha n} = 1.
$$
It remains to check what happens when $\alpha = c/n$ for some $c \in \reals_+$. But then 
$$ \lim_{n \rightarrow \infty} 
 \frac{\left(\alpha (n-1)+1\right)\left( 1-(1-\alpha)^n \right)}{\alpha n}
=  (1+1/c)(1-e^{-c}).
$$
The last expression is maximized when $c \approx 1.79328$ and the value it attains approaches 1.29843.
\end{proof}

A picture for the rate of growth of the competitive ratio of the crawling algorithm is depicted in Figure~\ref{fig: CRplot}.
\begin{figure}[htb]
\label{fig: CRplot}
\centering
\includegraphics[height=5cm]{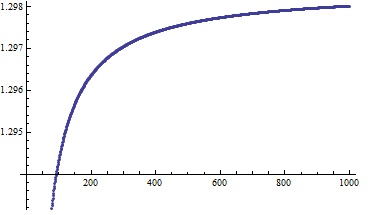}
\caption{Competitive ratio of the \OnlineSearch Algorithm as a function of the number of robots}
\end{figure}

\ignore{
f[n_, a_] := (a*(n - 1) + 1)*(1 - (1 - a)^n)/a/n
Argvalue[n_] := a /. Last[NMaximize[ f[n, a], a]]
value[n_] := f[n, Argvalue[n]]
t = Table[ value[n], {n, 2, 1000}]
ListPlot[t]
}

\ignore{
\begin{lemma}
More values of the competitive ratio for a totally uniform fleet{n} of our algorithm that we computed numerically are
\begin{equation*}
\left\{
\begin{array}{lll}
\frac{9}{8} = 1.125 &, n=2 & (\textrm{for}~ a_2 =0.5) \\
\frac{172+7 \sqrt{7}}{162}  \approx 1.17605 &, n=3 & (\textrm{for}~ a_3 = \frac{5-\sqrt{7}}{6}  \approx 0.392375 )\\
 \approx 1.20386 &, n=4 & (\textrm{for}~ a_4 = \frac{1}{12} \left(11-\frac{9}{\left(85-4 \sqrt{406}\right)^{1/3}}-\left(85-4 \sqrt{406}\right)^{1/3}\right)
  \approx 0.322472 )\\
\approx 1.22136 &, n=5 & (\textrm{for}~ a_5 \approx 0.273573 )\\
\approx 1.23339 &, n=6 & (\textrm{for}~ a_6 \approx 0.237494 )\\
\approx 1.24217 &, n=7 & (\textrm{for}~ a_7 \approx 0.209796 )\\
\approx 1.24886 &, n=8 & (\textrm{for}~ a_8 \approx 0.18787 )\\
\approx 1.25413 &, n=9 & (\textrm{for}~ a_9 \approx 0.170085 )\\
\approx 1.25839 &, n=10 & (\textrm{for}~ a_{10} \approx 0.155372 )\\
\approx 1.27801 &, n=20 & (\textrm{for}~ a_{20} \approx 0.0832742 )\\
\approx 1.29016 &, n=50 & (\textrm{for}~ a_{50} \approx 0.0348007)\\
\approx 1.29427 &, n=100 & (\textrm{for}~ a_{100} \approx 0.0176628)\\
\approx 1.29801 &, n=1000 & (\textrm{for}~ a_{1000} \approx 0.00179055)\\
\approx 1.29838&, n=10000 & (\textrm{for}~ a_{10000} \approx 0.000179301)\\
\approx 1.29842 &, n=100000 & (\textrm{for}~ a_{100000} \approx 0.0000179325)\\
\end{array}
\right.
\end{equation*}
\end{lemma}
\ignore{
f[n_, a_] := -1 + (1 - a)^n + 
  a - (1 - a)^n a + (1 - a)^n a n - (1 - a)^n a^2 n + (1 - a)^
   n a^2 n^2
g[n_, a_] := (a*(n - 1) + 1)*(1 - (1 - a)^n)/a/n
NMaximize[ g[100000, a], {a} ]
}
}

\ignore{
\subsection{The Crawling Algorithm Performing on Non W-Uniform Instances}\label{sec: crawling non w-uniform}
In this section we show that the competitive ration of the Crawling Algorithm grows with $n$ for S-uniform fleet{n}s, and it is unbounded even for fleet{2} for arbitrary instances. 

\begin{theorem}\label{prop: lower bound huniform crawling}
The competitive ratio of the Crawling Algorithm for S-uniform fleet{n}s is at least $(1-o(1))n$. 
\end{theorem}

\begin{proof}
We define an S-uniform fleet{n}\ as follows: $w_1 = 1+\frac{1}{n^2}$ and $w_j = j^2$, $j=2,\ldots,n$, while all walking speeds are set to 1. 

If $x$ is the length of the unknown interval to be searched, then the cost $T_\textrm{online}$ of the Crawling Algorithm, due to~\eqref{equa: cost for crawling epsilon W-uniform}, for S-uniform instances is at least 
\begin{equation}\label{equa: huniform lower crawling}
\frac{1+\sum_{k=1}^n\frac{1}{w_k-1}}
{\sum_{k=1}^n\frac{1}{1-\frac{1}{w_k}}}x =
\frac{1+n^2 + \sum_{k=2}^n\frac{1}{k^2-1}}
{n^2\left( 1+\frac{1}{n^2}\right) + \sum_{k=2}^n\frac{1}{1-\frac{1}{k^2}}}x 
\end{equation}
which can be easily seen to tend to $(1-o(1))x$ as $n$ grows to infinity. 

In contrast, by Lemma~\ref{cor: xs in h-uniform}, the optimal solution $T_S$ is given by ~\eqref{equa: sol for h-uniform} (multiplied by $x$). If we fix $\delta = n^{-1/4}$, then we have 
\begin{equation}\label{equa: offline upper crawling}
\frac{T_S^{-1}}{x} = \sum_{k=1}^n \prod_{j=k+1}^{n} \left( 1 - \frac{1}{w_j} \right) 
\geq \sum_{k=\delta n }^n \prod_{j=k+1}^{n} \left( 1 - \frac{1}{w_j} \right) 
\geq \sum_{k=\delta n }^n \left( 1 - \frac{1}{(k+1)^2} \right)^{n}
\geq (1-\delta) n \left( 1 - \frac{1}{(\delta n+1)^2} \right)^{n}
\end{equation}
Since $\delta = n^{-1/4}$, the above expression is at least $(1-o(1))n$. Combining then \eqref{equa: huniform lower crawling} and \eqref{equa: offline upper crawling} we conclude that competitive ratio $T_S/T_\textrm{online}$ is indeed $(1-o(1))n$.
\ignore{
cr[n_] :=
 Sum[
   Product[1 - 1/w[j], {j, k + 1, n}]
   , {k, 1, n}]*(1 + Sum[1/(w[k] - 1), {k, 1, n}])/
   Sum[ w[k]/(w[k] - 1), {k, 1, n}]

nn = 40;
w[t_] := (1 + 1/nn^2)*t^2;
1.*cr[nn]
}
\end{proof}

\costis{Can we show that Proposition~\ref{prop: lower bound huniform crawling} is actually tight?}

\begin{lemma}
The competitive ratio of the crawling algorithm is unbounded for arbitrary instances.
\end{lemma}

\begin{proof}
Consider a fleet{2} with $w_1,s_1\ll w_2,s_2$. The crawling algorithm will search the whole interval in time no more than $1/s_2$, while the online algorithm will not finish earlier than $1/w_1$. Overall, the competitive ratio is at least $s_2/w_1$, which is in general is unbounded. 
\end{proof}
}
\end{appendix}

\end{document}